\definecolor{DarkGray}{rgb}{0.1,0.1,0.5}
\def\place #1#2#3{\mspace{#2}\makebox[0pt]{\raisebox{#3}{#1}}\mspace{-#2}}	
\newcommand{\bra}[1]{{\langle#1|}}
\newcommand{\ket}[1]{{|#1\rangle}}
\newcommand{\braket}[2]{{\langle#1|#2\rangle}}
\newcommand{\ketbra}[2]{{\ket{#1}\!\bra{#2}}}
\newcommand{\abs}[1]{{\lvert #1\rvert}}	
\newcommand{\norm}[1]{{\| #1 \|}}
\DeclareMathOperator{\Ex}{\operatorname{E}}
\def\C {{\bf C}}
\def\N {{\bf N}}
\def\L {{\mathcal L}}
\def\D {{\mathcal D}}
\newcommand{\identity}{\ensuremath{\boldsymbol{1}}} 
\newcommand{\ADV} {\mathrm{Adv}}
\newcommand{\ADVpm} {\mathrm{Adv}^{\pm}}
\def\Pim {{\overline{\Pi}}}
\DeclareMathOperator{\abst}{\operatorname{abs}}
\def\biadj {B}	
\newcommand{\comment}[1]{\emph{\color{blue}Comment:\color{black} #1}} 
\newlength{\commentslength}
\newcommand{\comments}[1]{
\hspace{-2\parindent}
\addtolength{\commentslength}{-\commentslength}
\addtolength{\commentslength}{\linewidth}
\addtolength{\commentslength}{-\parindent}
\fcolorbox{blue}{white}{\smallskip\begin{minipage}[c]{\commentslength}
\emph{Comments:}\begin{itemize}#1\end{itemize}\end{minipage}}\bigskip
}
\renewcommand{\comment}[1]{}\renewcommand{\comments}[1]{}
\newcommand{\rem}[1]{}
\newtheorem{theorem}{Theorem}[section]
\newtheorem{lemma}[theorem]{Lemma}
\newtheorem{corollary}[theorem]{Corollary}
\newtheorem{proposition}[theorem]{Proposition}
\newtheorem{definition}[theorem]{Definition}
\newfont{\subsubsecfnt}{ptmri8t at 10pt}
\renewcommand{\subparagraph}[1]{\smallskip{\subsubsecfnt #1.}}
\numberwithin{equation}{section} 
\newcommand{\eqnref}[1]{\hyperref[#1]{{(\ref*{#1})}}}
\newcommand{\thmref}[1]{\hyperref[#1]{{Theorem~\ref*{#1}}}}
\newcommand{\lemref}[1]{\hyperref[#1]{{Lemma~\ref*{#1}}}}
\newcommand{\corref}[1]{\hyperref[#1]{{Corollary~\ref*{#1}}}}
\newcommand{\defref}[1]{\hyperref[#1]{{Definition~\ref*{#1}}}}
\newcommand{\secref}[1]{\hyperref[#1]{{Section~\ref*{#1}}}}
\newcommand{\figref}[1]{\hyperref[#1]{{Figure~\ref*{#1}}}}
\newcommand{\tabref}[1]{\hyperref[#1]{{Table~\ref*{#1}}}}
\newcommand{\remref}[1]{\hyperref[#1]{{Remark~\ref*{#1}}}}
\newcommand{\appref}[1]{\hyperref[#1]{{Appendix~\ref*{#1}}}}
\newcommand{\claimref}[1]{\hyperref[#1]{{Claim~\ref*{#1}}}}
\newcommand{\propref}[1]{\hyperref[#1]{{Proposition~\ref*{#1}}}}
\newcommand{\exampleref}[1]{\hyperref[#1]{{Example~\ref*{#1}}}}
\newcommand{\conjref}[1]{\hyperref[#1]{{Conjecture~\ref*{#1}}}}
\begin{document}

\title{Reflections for quantum query algorithms}
\author{Ben W.~Reichardt}
\date{}

\maketitle

\begin{abstract}
We show that any boolean function can be evaluated optimally by a quantum query algorithm that alternates a certain fixed, input-independent reflection with a second reflection that coherently queries the input string.  Originally introduced for solving the unstructured search problem, this two-reflections structure is therefore a universal feature of quantum algorithms.  

Our proof goes via the general adversary bound, a semi-definite program (SDP) that lower-bounds the quantum query complexity of a function.  By a quantum algorithm for evaluating span programs, this lower bound is known to be tight up to a sub-logarithmic factor.  The extra factor comes from converting a continuous-time query algorithm into a discrete-query algorithm.  We give a direct and simplified quantum algorithm based on the dual SDP, with a bounded-error query complexity that matches the general adversary bound.  

Therefore, the general adversary lower bound is tight; it is in fact an SDP for quantum query complexity.  This implies that the quantum query complexity of the composition $f \circ (g, \ldots, g)$ of two boolean functions $f$ and $g$ matches the product of the query complexities of $f$ and $g$, without a logarithmic factor for error reduction.  It further shows that span programs are equivalent to quantum query algorithms.  
\end{abstract}

\section{Introduction} \label{s:introduction}

The query complexity, or decision-tree complexity, of a function measures the number of input bits that must be read in order to evaluate the function.  Computation between queries is not counted.  Quantum algorithms can run in superposition, and the quantum query complexity therefore allows coherent access to the input string.  Quantum query complexity with bounded error lies below classical randomized query complexity, sometimes with a large gap~\cite{BernsteinVazirani97complexity, Simon97promise, Shor95factor, Aaronson09separation}, but for total functions~\cite{BealsBuhrmanCleveMoscaWolf98} or partial functions satisfying certain symmetries~\cite{AaronsonAmbainis09queries} the two measures are polynomially related; see the survey~\cite{BuhrmanDeWolf02querysurvey}.  

Although the query complexity of a function can fall well below its time complexity, studying query complexity has historically given insight into the power of quantum computers.  For example, the quantum part of Shor's algorithms for integer factorization and discrete logarithm is a quantum query algorithm for period finding~\cite{Shor95factor}.  Unlike for time complexity, there are also strong information-theoretic methods for placing lower bounds on quantum query complexity.  These lower-bound techniques can be broadly classified as using either the polynomial method~\cite{BealsBuhrmanCleveMoscaWolf98} or the adversary method~\cite{Ambainis00adversary, SpalekSzegedy04advequivalent}.  H{\o}yer and {\v S}palek~\cite{HoyerSpalek05adversarysurvey} have surveyed the development of these two techniques and their multitude of applications.  For now, suffice it to say that the two techniques are incomparable.  In particular, for the $n$-input collision problem, the best adversary lower bound is of~$O(1)$, whereas the correct complexity, determined by the polynomial method~\cite{AaronsonShi04collisioned} and a matching algorithm~\cite{BrassardHoyerTapp98collision} is $\Theta(n^{1/3})$.  

\clearpage

However, H{\o}yer, Lee and {\v S}palek~\cite{HoyerLeeSpalek07negativeadv} discovered a strict generalization of the adversary bound that remains a lower bound on quantum query complexity: 

\begin{definition}[Adversary bounds] \label{t:adversarydef}
For finite sets $C$ and $E$, and $\D \subseteq C^n$, let $f: \D \rightarrow E$.  An adversary matrix for $f$ is a $\abs{\D} \times \abs{\D}$ real, symmetric matrix $\Gamma$ that satisfies $\bra x \Gamma \ket y = 0$ for all~$x, y \in \D$ with $f(x) = f(y)$.  Define the adversary and general adversary bounds for $f$ by 
\begin{align}
\ADV(f) &= 
\max_{\Gamma \geq 0}
\frac{\norm{\Gamma}}{\max_{j \in [n]} \norm{\Gamma \circ \Delta_j}} \\
\ADVpm(f) &= 
\max_\Gamma
\frac{\norm{\Gamma}}{\max_{j \in [n]} \norm{\Gamma \circ \Delta_j}}
 \enspace .
\end{align}
Both maximizations are over adversary matrices $\Gamma$, required to be entry-wise nonnegative in $\ADV(f)$.  $\Gamma \circ \Delta_j$ denotes the entry-wise matrix product between $\Gamma$ and $\Delta_j = \sum_{x, y \in \D : x_j \neq y_j} \ketbra x y$.  
\end{definition}

Although the definitions of the two adversary bounds are very similar, the general adversary bound is much more powerful.  
In fact, the general adversary lower bound is always nearly tight: 

\begin{theorem}[\cite{Reichardt09spanprogram}] \label{t:querycomplexitytightnonbinary}
For any function $f : \D \rightarrow E$, with $\D \subseteq C^n$, the quantum query complexity $Q(f)$ satisfies 
\begin{equation} \label{e:querycomplexitytightnonbinary}
Q(f) = O\bigg(\ADVpm(f) \, \frac{\log \ADVpm(f)}{\log \log \ADVpm(f)} \log \abs{C} \cdot \log \abs{E} \bigg)
 \enspace .
\end{equation}
\end{theorem}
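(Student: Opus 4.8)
The idea is to prove the bound for functions with boolean domain and range and then recover the general case by two standard reductions whose overheads are exactly the factors $\log\abs C$ and $\log\abs E$. For the range, observe that $f(x)$ is determined by its $\lceil\log_2\abs E\rceil$ bits; writing $f_i:\D\to\{0,1\}$ for the $i$-th bit, we have $f(x)=f(y)\Rightarrow f_i(x)=f_i(y)$, so every adversary matrix for $f_i$ is also one for $f$, and hence $\ADVpm(f_i)\le\ADVpm(f)$. Computing all $\lceil\log_2\abs E\rceil$ bits and using that $t\mapsto t\log t/\log\log t$ is increasing then gives the factor $\log\abs E$. For the domain, replacing each symbol of $C$ by a binary encoding turns one query into $\lceil\log_2\abs C\rceil$ binary queries and changes the general adversary bound by at most a factor $\log\abs C$. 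So it suffices to prove $Q(g)=O\bigl(\ADVpm(g)\,\tfrac{\log\ADVpm(g)}{\log\log\ADVpm(g)}\bigr)$ for $g:\{0,1\}^m\to\{0,1\}$.

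Next, dualize the SDP defining $\ADVpm(g)$. After checking strong duality (a Slater-type condition and attainment of the optimum in finite dimension), the maximum over adversary matrices becomes a minimum over vector families $\{\ket{w_{zj}}:z\in\{0,1\}^m,\ j\in[m]\}$ subject to $\sum_{j:\,x_j\neq y_j}\braket{w_{xj}}{w_{yj}}=1$ for all $x\in g^{-1}(0)$ and $y\in g^{-1}(1)$, with objective $\max_z\sum_j\norm{\ket{w_{zj}}}^2$; its optimal value is $\ADVpm(g)$. This dual solution is exactly the data of a span program $P$ that computes $g$ with witness size $\ADVpm(g)$; here I would invoke the span-program/adversary correspondence of Reichardt and {\v S}palek for the nonnegative-weight case and note that dropping the nonnegativity constraint on $\Gamma$ is precisely what turns a monotone span program into a general one.

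The core of the argument is a quantum algorithm that evaluates a span program using a number of queries proportional to its witness size, which I would set up in the continuous-time (fractional-query) model. From $P$ I build an input-dependent Hamiltonian $H(x)$, namely a weighted bipartite graph assembled from the vectors $\ket{w_{zj}}$ together with a fixed target state $\ket t$, and, after rescaling so that the positive and negative witness sizes both equal $\ADVpm(g)$, establish: (i) if $g(x)=1$ then $\ket t$ has $\Omega(1)$ overlap with a $0$-eigenvalue eigenvector of $H(x)$ coming from the positive witness; (ii) if $g(x)=0$ then, by the effective spectral gap lemma applied to the negative witness, $\ket t$ has negligible overlap with eigenvectors of $H(x)$ whose eigenvalue lies in a window of width $\Omega(1/\ADVpm(g))$ around $0$. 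Running $e^{-iH(x)t}$ and performing eigenvalue estimation to resolution $\Theta(1/\ADVpm(g))$, then outputting $1$ iff near-zero energy is detected, yields a bounded-error algorithm of continuous-query time $T=O(\ADVpm(g))$.

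Finally, discretize: by the continuous-to-discrete simulation of Cleve, Gottesman, Mosca, Somma and Yonge-Mallo, a continuous-time query algorithm of total time $T$ at bounded error can be implemented with $O\bigl(T\,\tfrac{\log T}{\log\log T}\bigr)$ discrete queries, which gives $Q(g)=O\bigl(\ADVpm(g)\,\tfrac{\log\ADVpm(g)}{\log\log\ADVpm(g)}\bigr)$; combined with the two reductions, this is the claimed bound. I expect the main obstacle to be establishing the spectral facts (i) and (ii) with the correct normalization, so that the relevant gap is $\Omega(1/\ADVpm(g))$ rather than, say, $\Omega(1/\ADVpm(g)^2)$; this is exactly where the precise definition of witness size and the effective spectral gap lemma earn their keep. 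A secondary difficulty is making the SDP duality rigorous (attainment and finite-dimensionality of the optimal dual vectors). The continuous-to-discrete conversion is the sole source of the sub-logarithmic overhead, and eliminating it is precisely what the present paper does.
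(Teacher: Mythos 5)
Your core route---dualizing the $\ADVpm$ SDP into the vector form of \lemref{t:spanprogramSDPconstruction}, reading the dual solution as a canonical span program of witness size $\ADVpm(g)$, building the bipartite-graph Hamiltonian with a target $\ket t$, proving an $\Omega(1)$ overlap with a zero-eigenvector in the true case and an $\Omega(1/\ADVpm(g))$ effective spectral gap in the false case, and paying the $\log/\log\log$ overhead only in the Cleve--Gottesman--Mosca--Somma--Yonge-Mallo discretization of the continuous-time query algorithm---is exactly the proof from \cite{Reichardt09spanprogram} that this paper cites for the theorem, so the heart of your plan matches.

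The genuine gap is in your two ``standard reductions,'' which as stated do not yield the factors claimed in Eq.~\eqnref{e:querycomplexitytightnonbinary}. For the range: computing the $\lceil \log_2 \abs E \rceil$ output bits by separate bounded-error algorithms is not enough; to have all bits simultaneously correct you must amplify each bit to error $O(1/\log \abs E)$, which costs an extra $\Theta(\log\log\abs E)$ factor. That is precisely the factor present in the original statement of \cite[Theorem~10.2]{Reichardt09spanprogram}, and removing it---as the statement here requires---needs the robust/direct-sum composition result of \cite[Corollary~3]{BuhrmanNewmanRohrigDeWolf05robust}, not the monotonicity of $t \mapsto t \log t/\log\log t$. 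For the domain: the claim that binary-encoding the input ``changes the general adversary bound by at most a factor $\log \abs C$'' is not a routine observation; comparing $\ADVpm$ of $f$ with that of its binary encoding (e.g., by transforming a feasible dual solution) is exactly the content of Lee's stability theorem \cite{Lee09adversaryboolean}, which must be cited or proved rather than asserted. Finally, your side remark that dropping the nonnegativity constraint on $\Gamma$ is what turns a monotone span program into a general one is incorrect---monotonicity of span programs concerns monotone functions, and the correspondence used here is between the general adversary bound and canonical span programs---but that comment is not load-bearing.
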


This surprising upper bound follows from a connection between quantum query algorithms and the span program computational model~\cite{KarchmerWigderson93span} first observed in~\cite{ReichardtSpalek08spanprogram} and significantly strengthened in~\cite{Reichardt09spanprogram, Reichardt09spanprogramfocs}.  Note that the original statement~\cite[Theorem~10.2]{Reichardt09spanprogram} of \thmref{t:querycomplexitytightnonbinary} restricted to the case $\abs C = 2$ and included an additional factor of $\log \log \abs E$---this factor can be removed by~\cite[Corollary~3]{BuhrmanNewmanRohrigDeWolf05robust}.  Lee has shown that the general adversary bound of a function with boolean output is stable under encoding the input into binary~\cite{Lee09adversaryboolean}, allowing the restriction $\abs C = 2$ to be removed at a logarithmic cost.  

\thmref{t:querycomplexitytightnonbinary} and the connection between span programs and quantum query algorithms behind its proof have corollaries including a query-optimal and nearly time-optimal quantum algorithm for evaluating a large class of read-once formulas over any finite gate set~\cite{Reichardt09unbalancedformula}.  However, is \thmref{t:querycomplexitytightnonbinary} optimal?  The factors of $\log \abs C$ and $\log \abs E$ are natural, but the $\log$ over $\log \log$ term is not.  It comes from converting a certain continuous-time query algorithm into a discrete-query algorithm following~\cite{CleveGottesmanMoscaSommaYongeMallo08discretize}.  This conversion also somewhat obscures the algorithm's structure.  

It was conjectured that the unnatural $\log$ over $\log \log$ factor could be removed~\cite[Conjecture~11.1]{Reichardt09spanprogram}.  In this article, we confirm the conjecture: 

\begin{theorem} \label{t:advboundtight}
For any function $f : \D \rightarrow \{0,1\}$, with $\D \subseteq \{0,1\}^n$, the general adversary bound characterizes quantum query complexity: 
\begin{equation} \label{e:advboundtight}
Q(f) = \Theta\big(\ADVpm(f)\big)
 \enspace .
\end{equation}
\end{theorem}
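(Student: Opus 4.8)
The plan is to establish the upper bound $Q(f) = O(\ADVpm(f))$, since the matching lower bound $Q(f) = \Omega(\ADVpm(f))$ is the content of the general adversary method of~\cite{HoyerLeeSpalek07negativeadv}. The starting point is the dual semidefinite program to the $\ADVpm$ program: strong SDP duality gives, for a function with boolean output, a feasible solution to the minimization SDP whose objective value is $\ADVpm(f)$. Concretely, from the dual of Definition~\ref{t:adversarydef} one extracts, for each input $x \in \D$, a collection of vectors $\ket{v_{xj}}$ (one for each query index $j \in [n]$) such that for all $x, y \in \D$ with $f(x) \neq f(y)$, $\sum_{j : x_j \neq y_j} \braket{v_{xj}}{v_{yj}} = 1$, while $\sum_j \norm{\ket{v_{xj}}}^2 \le W$ for a witness size $W = O(\ADVpm(f))$ (after the usual normalization balancing the $0$- and $1$-inputs). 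This is precisely the data of a span program, or more directly an input-dependent ``effective spectral gap'' structure; the connection to span programs is the one exploited in~\cite{Reichardt09spanprogram, Reichardt09spanprogramfocs}, but here I would bypass the span-program formalism and work with the vectors directly.

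The core of the argument is to build, from these dual vectors, a discrete-query quantum algorithm directly, without first passing through a continuous-time Hamiltonian algorithm and then discretizing (the discretization step of~\cite{CleveGottesmanMoscaSommaYongeMallo08discretize} is exactly what introduces the spurious $\log/\log\log$ factor in Theorem~\ref{t:querycomplexitytightnonbinary}). The algorithm should alternate two reflections: an input-independent reflection $\Pim$ about a fixed subspace determined by the dual vectors, and an input-dependent reflection that is block-diagonal over the query register and acts on each block as a reflection built from the coherent query $O_x : \ket{j}\ket{b} \mapsto \ket{j}\ket{b \oplus x_j}$. One sets up a ``phase-estimation-free'' analysis: starting from a suitable initial state $\ket{w_x}$ (encoding the target value $f(x)$), the product of the two reflections has $\ket{w_x}$ close to a $1$-eigenvector when run for the right number of steps, and the number of reflections needed is governed by the witness size $W$ via an effective spectral gap lemma. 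The key quantitative input is a spectral gap bound of the form $\Theta(1/\sqrt{W})$ on the relevant eigenvalue phase, combined with a bound showing the initial state has small overlap with the ``bad'' low-phase subspace; running $O(\sqrt{W}) = O(\sqrt{\ADVpm(f)})$... wait, one must be careful: the query complexity should come out to $O(\ADVpm(f))$, so the right statement is that $O(W)$ reflections suffice when the witness is normalized so that $W = O(\ADVpm(f))$, or equivalently the effective gap is $\Theta(1/W)$ with a normalized witness vector.

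After the algorithm is in place, the analysis splits into two lemmas. The first is a \emph{completeness} (or ``small witness'') lemma: when the parameters are tuned correctly, the fixed reflection $\Pim$ combined with the query reflection has the property that a normalized version of the witness vector for input $x$ is a $1$-eigenvector of the product up to error $O(1/W)$ per application, so running the reflections $O(W)$ times and measuring recovers $f(x)$ with bounded error. The second is a \emph{soundness}/\emph{effective spectral gap} lemma, generalizing the analysis in~\cite{Reichardt09spanprogramfocs}, showing that the relevant initial state has negligible support on eigenvectors of the reflection product with eigenvalue phase below $\Theta(1/W)$, which is what forces the distinguishing behavior. I expect the main obstacle to be this effective spectral gap lemma: one must quantify how the input-dependent query reflection perturbs the spectrum near $1$ of the fixed reflection, and show the perturbation is controlled linearly (not quadratically, which would reintroduce a square root and hence the wrong bound) by the inner-product constraint $\sum_{j : x_j \neq y_j}\braket{v_{xj}}{v_{yj}} = 1$. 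Handling this cleanly — ideally via a direct amplitude-amplification-style argument rather than phase estimation, so as to avoid the logarithmic overhead — is the crux, and also what makes the resulting algorithm have the clean two-reflections form advertised in the abstract. The remaining steps (normalizing the SDP solution to balance $0$- and $1$-inputs, reducing bounded error to the stated $\Theta$, and assembling the pieces) are routine.
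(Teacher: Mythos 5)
Your outline coincides with the paper's route: lower bound from~\cite{HoyerLeeSpalek07negativeadv}, dual SDP (\lemref{t:spanprogramSDPconstruction}) yielding vectors $\ket{v_{xj}}$ with value $W = \ADVpm(f)$, and a two-reflections algorithm --- a fixed reflection determined by the dual vectors alternated with a one-query input-dependent reflection --- run $O(W)$ times, with a completeness statement and an effective-spectral-gap statement doing the work. But the proposal stops exactly where the proof has to start. The step you yourself label ``the crux'' --- showing that the product $U_x = (2\Pi_x-\identity)(2\Delta-\identity)$ has, for $f(x)=0$, only $O(\Theta W)$ squared overlap of the initial state with eigenvectors of phase at most $\Theta$, while for $f(x)=1$ the initial state has constant overlap with a $1$-eigenvector --- is asserted as a goal, not argued. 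This is precisely \lemref{t:completenesssoundness}, and its proof is not routine: the paper gets it by (i) exhibiting explicit eigenvalue-zero witness vectors for the auxiliary graphs $G(x)$, $G'(x)$ from the SDP constraints (\lemref{t:eigenvaluezeroGx}), (ii) invoking the effective-spectral-gap theorem for bipartite graphs (\thmref{t:bipartitepsdreduction}) to control $A_{G(x)}$ near zero, and (iii) transferring that control to the unitary $U_x$ via Jordan's two-dimensional invariant-subspace decomposition, splitting $\braket{0}{\hat\zeta}$ into a $\bra 0\Delta\ket{\hat\zeta}$ piece (bounded through a further split of $\Delta\ket{\hat\zeta}$ into small- and large-eigenvalue parts of $A_{G(x)}$) and a $\bra 0\Pi_x(\identity-\Delta)\ket{\hat\zeta}$ piece (bounded by $\Theta/2$ using $\sin\frac{\theta}{2}$ factors on each invariant plane). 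Your remark that the perturbation must be ``controlled linearly, not quadratically'' gestures at the right scaling but supplies no mechanism; without step (iii) there is no proof.

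Two further points. First, your initial state $\ket{w_x}$ ``encoding the target value $f(x)$'' cannot be the algorithm's starting point --- it is not preparable without knowing $f(x)$. The paper starts from the fixed, input-independent state $\ket 0$ (the extra vertex carrying the target vector $\ket t$), and both completeness and soundness are phrased as overlaps of this fixed state with spectral subspaces of $U_x$; you would need to restructure your completeness lemma accordingly (the witness enters only as the $1$-eigenvector that $\ket 0$ overlaps, after a scaling of $\ket t$ that makes the overlap at least $9/10$). Second, your concern that phase estimation reintroduces a logarithmic factor is misplaced: phase estimation with precision $\delta_p = \Theta(1/W)$ and \emph{constant} error rate costs $O(W)$ queries; the logarithmic overhead in \thmref{t:querycomplexitytightnonbinary} came from the continuous-to-discrete conversion of~\cite{CleveGottesmanMoscaSommaYongeMallo08discretize}, not from phase estimation, so the paper's Algorithm~1 happily uses it (with Algorithms~2 and~3 as optional simplifications).
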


\thmref{t:advboundtight} suffices to simplify Eq.~\eqnref{e:querycomplexitytightnonbinary} following the proof of \cite[Theorem~10.2]{Reichardt09spanprogram}: 

\begin{corollary}
For finite sets $C$ and $E$, $\D \subseteq C^n$, and any function $f : \D \rightarrow E$, 
\begin{equation} \label{e:querycomplexityexactlytightnonbinary}
Q(f) = \Omega\big(\ADVpm(f)\big)
\qquad \text{and} \qquad
Q(f) = O\big(\ADVpm(f) \log \abs{C} \cdot \log \abs{E} \big)
 \enspace .
\end{equation}
\end{corollary}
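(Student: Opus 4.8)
The plan is to deduce the corollary from \thmref{t:advboundtight} by following the argument already used to prove \cite[Theorem~10.2]{Reichardt09spanprogram}, but feeding in the improved, factor-free characterization for the boolean-output case. The lower bound $Q(f) = \Omega(\ADVpm(f))$ is not new: it holds for any $f : \D \to E$ with $\D \subseteq C^n$ by the original negative-weights adversary argument of H\o yer, Lee and {\v S}palek~\cite{HoyerLeeSpalek07negativeadv}, and nothing in this paper weakens it, so that half requires no work beyond citing it. The content is in the upper bound.

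For the upper bound, first I would reduce the output alphabet $E$ to binary. Evaluating $f : \D \to E$ is equivalent to deciding, for each of the $\lceil \log \abs E \rceil$ bits of a fixed binary encoding of $E$, the boolean function $f_i : \D \to \{0,1\}$ that returns the $i$-th bit of (the encoding of) $f(x)$. Running the algorithm for each $f_i$ and reading off the $\log \abs E$ answers recovers $f(x)$; the query cost is at most $\log \abs E$ times $\max_i Q(f_i)$, and a standard adversary-bound argument (e.g.\ the one in~\cite{Reichardt09spanprogram}, using that $\Gamma$ for $f_i$ is also a valid adversary matrix for $f$ after the appropriate block structure, hence $\ADVpm(f_i) \le \ADVpm(f)$) gives $Q(f_i) = O(\ADVpm(f_i) \log \abs C)$ once we have handled the input alphabet. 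So it suffices to prove $Q(g) = O(\ADVpm(g) \log \abs C)$ for any $g : \D \to \{0,1\}$ with $\D \subseteq C^n$.

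Next I would reduce the input alphabet $C$ to binary using Lee's result~\cite{Lee09adversaryboolean}: for a boolean-output function, encoding each input symbol of $C$ into $\lceil \log \abs C \rceil$ bits yields a boolean function $g' : \D' \to \{0,1\}$ with $\D' \subseteq \{0,1\}^{n \lceil \log \abs C\rceil}$ and $\ADVpm(g') = O(\ADVpm(g))$. One query to $g$ is simulated by $\lceil \log \abs C\rceil$ queries to $g'$ (or, more carefully, a single query to $g$ can be implemented with $O(\log \abs C)$ queries to the binary-encoded input, tracking the $O(\log\abs C)$ factor there rather than in the adversary bound). Now apply \thmref{t:advboundtight} to $g'$: since $g'$ has both boolean input and boolean output, $Q(g') = \Theta(\ADVpm(g')) = O(\ADVpm(g))$. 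Composing the simulations, $Q(g) = O(\ADVpm(g) \log \abs C)$, and then $Q(f) = O(\ADVpm(f) \log\abs C \cdot \log\abs E)$ as claimed.

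The main obstacle is bookkeeping the inequalities between adversary bounds through the two encodings, i.e.\ checking that $\ADVpm(f_i) \le \ADVpm(f)$ (output reduction) and $\ADVpm(g') = O(\ADVpm(g))$ (input reduction, Lee's theorem) compose correctly and that the $\log\abs C$ and $\log\abs E$ factors enter multiplicatively exactly once each — with the key point being that, unlike in~\cite{Reichardt09spanprogram}, there is no residual $\log/\log\log$ term because \thmref{t:advboundtight} is already tight for the fully binary case. No new ideas are needed; this is the routine combination of \thmref{t:advboundtight} with the encoding lemmas already in the literature, exactly as the original proof of \cite[Theorem~10.2]{Reichardt09spanprogram} combined \thmref{t:querycomplexitytightnonbinary}'s binary case with those same lemmas.
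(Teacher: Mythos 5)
Your overall route is the one the paper itself takes: the paper proves this corollary in one line, by rerunning the proof of \cite[Theorem~10.2]{Reichardt09spanprogram} with \thmref{t:advboundtight} substituted for the binary case, using Lee's input-encoding result~\cite{Lee09adversaryboolean} for the $\log \abs C$ factor and the adversary lower bound of~\cite{HoyerLeeSpalek07negativeadv} for the $\Omega$ direction --- exactly your decomposition (your observation that an adversary matrix for a bit-predicate $f_i$ is also an adversary matrix for $f$, hence $\ADVpm(f_i) \leq \ADVpm(f)$, is correct).

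There is, however, one concrete gap in your output reduction. You claim that running the algorithm for each of the $\lceil \log \abs E \rceil$ bit-predicates $f_i$ and ``reading off the answers'' costs at most $\log \abs E \cdot \max_i Q(f_i)$; but each $f_i$ is only evaluated with constant error, so the probability that \emph{all} bits come out right can be as small as $(2/3)^{\log \abs E}$. The naive fix --- amplifying each $f_i$ to error $O(1/\log\abs E)$ --- reintroduces a $\log\log\abs E$ factor, which is precisely the factor present in the original statement of \cite[Theorem~10.2]{Reichardt09spanprogram} and which the corollary claims to avoid. The paper removes it by invoking \cite[Corollary~3]{BuhrmanNewmanRohrigDeWolf05robust}, which computes all $k$ bits simultaneously, correctly with constant probability, at cost $O(k)$ times the single-bit bounded-error cost, with no error-reduction overhead. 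Your closing remark that no residual $\log/\log\log$ term survives ``because \thmref{t:advboundtight} is already tight'' addresses a different factor (the discretization term $\log \ADVpm / \log\log \ADVpm$), not this one. With that citation added your argument closes; as written it only yields $Q(f) = O\big(\ADVpm(f) \log\abs C \cdot \log \abs E \cdot \log\log\abs E\big)$.
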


\def\compose{\bullet}

\thmref{t:advboundtight} also allows for obtaining optimal results for the query complexity of composed~functions.  For $f: \{0,1\}^n \rightarrow \{0,1\}$ and $g: \{0,1\}^m \rightarrow \{0,1\}$, let $f \compose g$ be the function $\{0,1\}^{n m} \rightarrow \{0,1\}$ defined by $(f \compose g)(x) = f\big(g(x_1, \ldots, x_m), \ldots, g(x_{(n-1)m+1}, \ldots, x_{n m})\big)$.  A bounded-error algorithm for evaluating $f \compose g$ can be built from composing bounded-error algorithms for $f$ and $g$; thus, $Q(f \compose g) = O\big(Q(f) Q(g) \log Q(f)\big)$.  However, the logarithmic factor for error reduction can be removed, and there is a matching lower bound: 

\begin{theorem} \label{t:composition}
Let $f: \{0,1\}^n \rightarrow \{0,1\}$ and $g : \{0,1\}^m \rightarrow \{0,1\}$.  Then 
\begin{equation}
Q(f \compose g) = \Theta\big(Q(f) Q(g)\big)
 \enspace .
\end{equation}
\end{theorem}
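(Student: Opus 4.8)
The plan is to derive \thmref{t:composition} from \thmref{t:advboundtight} together with the known multiplicativity of the general adversary bound under composition. Concretely, one uses the fact (due to H{\o}yer--Lee--{\v S}palek and Lee--Mittal--Reichardt) that for boolean functions $f$ and $g$,
\begin{equation*}
\ADVpm(f \compose g) = \Theta\big(\ADVpm(f) \cdot \ADVpm(g)\big)
 \enspace ,
\end{equation*}
in fact with the constant being $1$ in the relevant regime. Granting this, the theorem is almost immediate: by \thmref{t:advboundtight} applied three times, $Q(f \compose g) = \Theta(\ADVpm(f \compose g)) = \Theta(\ADVpm(f)\ADVpm(g)) = \Theta(Q(f))\cdot\Theta(Q(g)) = \Theta(Q(f)Q(g))$.

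In more detail, I would proceed as follows. First, recall the lower-bound direction of adversary composition: given optimal adversary matrices $\Gamma_f$ for $f$ and $\Gamma_g$ for $g$, one builds an adversary matrix for $f \compose g$ by a suitable tensor-product-like construction (replacing each entry of $\Gamma_f$ by a block built from $\Gamma_g$ and the all-ones or identity pattern according to whether the corresponding $g$-values agree), and shows its normalized ratio is at least $\ADVpm(f)\,\ADVpm(g)$; this gives $\ADVpm(f\compose g) \geq \ADVpm(f)\,\ADVpm(g)$. Second, the matching upper bound $\ADVpm(f \compose g) \leq \ADVpm(f)\,\ADVpm(g)$ follows from the dual SDP formulation, by composing optimal dual solutions (feasible "witness vectors") for $f$ and $g$ into a feasible dual solution for $f \compose g$ of the right objective value. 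Third, invoke \thmref{t:advboundtight} to translate all three adversary bounds into query complexities, being careful that the $\Theta$'s compose correctly since all quantities are at least $1$.

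The main obstacle is not in the algorithmic content — \thmref{t:advboundtight} does all the heavy lifting — but in ensuring the adversary composition statement is available in exactly the boolean-to-boolean form needed, including both the $\geq$ and $\leq$ directions with no stray logarithmic or $\log\log$ losses. The lower-bound direction $\ADVpm(f\compose g) \geq \ADVpm(f)\ADVpm(g)$ is the classical and robust part; the upper-bound direction is where one must be careful, since it genuinely uses the dual (minimization) SDP and the fact that $g$ has boolean output so that the feasibility regions compose cleanly. An alternative, entirely self-contained route for the upper bound is to skip the adversary-composition lemma and instead compose the query algorithms directly: by \thmref{t:advboundtight} there is a query algorithm for $f$ using $O(\ADVpm(f))$ queries and one for $g$ using $O(\ADVpm(g))$ queries, each with \emph{bounded} (not reduced) error; the span-program / dual-SDP algorithm behind \thmref{t:advboundtight} composes so that the inner $g$-algorithm's error does not need to be reduced before feeding into the outer $f$-algorithm, yielding $Q(f\compose g) = O(\ADVpm(f)\ADVpm(g)) = O(Q(f)Q(g))$ directly. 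Either way, the upper bound is matched by the lower bound $Q(f\compose g) = \Omega(\ADVpm(f\compose g)) = \Omega(\ADVpm(f)\ADVpm(g)) = \Omega(Q(f)Q(g))$, completing the proof.
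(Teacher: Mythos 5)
Your proposal is correct and follows essentially the same route as the paper: cite the exact composition identity $\ADVpm(f \compose g) = \ADVpm(f)\,\ADVpm(g)$ for boolean functions and then apply \thmref{t:advboundtight} to both directions. The additional sketches of how the composition identity is proved and the alternative algorithm-composition argument for the upper bound are fine but not needed beyond what the cited composition result already provides.
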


\begin{proof}
The general adversary bound composes as $\ADVpm(f \compose g) = \ADVpm(f) \ADVpm(g)$ for boolean functions $f$ and $g$~\cite{HoyerLeeSpalek07negativeadv, Reichardt09spanprogram}, so the claim follows from Eq.~\eqnref{e:advboundtight}.  
\end{proof}

The algorithm behind \thmref{t:advboundtight} is substantially simpler than the algorithm for \thmref{t:querycomplexitytightnonbinary}, although its analysis requires slightly more work.  On input $x \in \{0,1\}^n$, 
the algorithm consists of alternating applications of the input oracle~$O_x$---a unitary that maps $\ket{i, b}$ to $\ket{i, x_i \oplus b}$, for $i = 1, \ldots, n$ and $b \in \{0,1\}$---and a certain fixed reflection.  The reflection is about the eigenvalue-zero subspace of the adjacency matrix $A_G$ for a graph $G$ derived from a dual SDP for $\ADVpm$.  This structure is based on Ambainis's AND-OR formula-evaluation algorithm~\cite{ambainis07nand}.  A previous algorithm in Szegedy's quantum walk model~\cite{Szegedy04walkfocs} ran in $O(\ADVpm(f) \norm{\abst(A_G)})$ queries, where $\norm{\abst(A_G)}$ is the operator norm of the entry-wise absolute value of $A_G$~\cite[Prop.~9.4]{Reichardt09spanprogram}.  Ambainis's approach efficiently removes the dependence on higher-energy portions of the adjacency matrix~$A_G$.  The analysis of the algorithm needs to transfer an ``effective" spectral gap for the adjacency matrix of a related graph into an effective spectral gap for the product of $O_x$ and the fixed reflection.  

In fact, the input oracle is itself a reflection, $O_x^2 = \identity$.  Therefore the algorithm consists of alternating two fixed reflections, much like Grover's search algorithm~\cite{Grover96search}.  It follows that every boolean function can be evaluated, with bounded error, optimally in this way.  While known algorithms can in principle be converted into this form~\cite[Theorems~3.1,~5.2]{Reichardt09spanprogram}, we do not know an explicit closed form for the second reflection, e.g., for the collision problem.  

The bipartite graph~$G$ can be thought of as a span program~\cite{KarchmerWigderson93span}, and was constructed in the span program framework of~\cite{Reichardt09spanprogram}.  Thus our algorithm is naturally seen as a quantum algorithm for evaluating span programs.  Since the best span program for a function has witness size exactly equal to the general adversary bound~\cite{Reichardt09spanprogram, Reichardt10wsize}, \thmref{t:advboundtight} also implies that quantum computers, measured by query complexity, and span programs, measured by witness size, are equivalent computational models for evaluating boolean functions.  For simplicity, though, we will not detail this connection further.  We will require from~\cite{Reichardt09spanprogram} only \thmref{t:bipartitepsdreduction} below, which without reference to span programs gives an effective spectral gap for a bipartite graph.  

Barnum, Saks and Szegedy~\cite{BarnumSaksSzegedy03adv} have given a family of SDPs that characterize quantum query complexity according to their feasibility or infeasibility, instead of according to the optimum value of a single SDP.  The BSS SDPs work for any specified error rate, including zero.  The general adversary bound is a polynomially smaller SDP, but of course the truth table of a function is typically exponentially long.  Whereas our algorithm uses a workspace of $n + O(\log n)$ qubits to evaluate an $n$-bit boolean function (by~\cite[Lemma~6.6]{Reichardt09spanprogram}), $n+1$ qubits suffice by~\cite{BarnumSaksSzegedy03adv}.  To the author's knowledge, neither the BSS SDPs nor $\ADVpm$ have ever been solved directly for a nontrivial, asymptotically large family of functions, with better than a constant-factor improvement over the adversary bound (see~\cite{ChildsLee07orderedsearch}).  However, the easy composition rule for $\ADVpm$, used in \thmref{t:composition}, allows for computing $\ADVpm$ for a read-once formula by multiplying the bounds computed for constant-size gates.  It may be that the very simple form of our algorithm will allow for further progress in the understanding and development of quantum query algorithms.

\subsection{Open problems}

An appealing open problem in quantum computing is to show a tighter relationship between classical and quantum query complexities for total functions---the largest known gap is quadratic but the best upper bound is $D(f) = O(Q(f)^6)$~\cite{BealsBuhrmanCleveMoscaWolf98}.  Speculatively, the strong composition properties of quantum algorithms for total boolean functions may be a tool for approaching this problem.  It also remains interesting to study non-boolean functions, their composition and the necessity of the $\log \abs{C}$ and $\log \abs{E}$ factors in Eq.~\eqnref{e:querycomplexityexactlytightnonbinary}.  \thmref{t:composition}, the two-reflections form of the algorithm, and the elimination of the $\log \ADVpm(f)$ factor suggest that it may be possible to adapt the algorithm to evaluate any boolean function $f$ with a \emph{bounded-error} input oracle with the same asymptotic number $\Theta(\ADVpm(f))$ of quantum queries, following~\cite{HoyerMoscaDeWolf03berror-search} for the OR function.  Classically, in the noisy decision-tree model, an extra logarithmic factor for error reduction is sometimes required~\cite{FeigeRaghavanPelegUpfal94noisy}, but this factor is not known to be needed for any quantum query algorithm~\cite{BuhrmanNewmanRohrigDeWolf05robust}.

\subsection{Definitions} \label{s:definitions}

For a natural number $n \in \N$, let $[n] = \{1, 2, \ldots, n\}$.  For a bit $b \in \{0,1\}$, let $\bar b = 1-b$.  For a finite set~$X$, let $\C^X$ be the Hilbert space $\C^{\abs X}$ with orthonormal basis $\{ \ket x : x \in X \}$.  For vector spaces~$V$ and $W$ over $\C$, let $\L(V, W)$ denote the set of all linear transformations from $V$ into $W$, and let~$\L(V) = \L(V, V)$.  $\norm{A}$ is the spectral norm of an operator $A$.

\section{The algorithms} \label{s:algorithm}

For a boolean function, taking the dual of the general adversary bound SDP in \defref{t:adversarydef} gives: 

\begin{lemma}[{\cite[Theorem~6.2]{Reichardt09spanprogram}}] \label{t:spanprogramSDPconstruction}
Let $f : \D \rightarrow \{0,1\}$, with $\D \subseteq \{0,1\}^n$.  For $b \in \{0,1\}$, let $F_b = \{ x \in \D : f(x) = b \}$.  Then 
\begin{equation}\begin{split} \label{e:spanprogramSDPCholesky}
\ADVpm(f)
&=
\min_{\Large \substack{m \in \N, \{ \ket{v_{x j}} \in \C^m : \, x \in \D, j \in [n] \} \,: \\ \forall (x,y) \in F_0 \times F_1, \, \sum_{j \in [n] : x_j \neq y_j} \braket{v_{x j}}{v_{y j}} = 1}} \max_{\Large \substack{x \in \D}} \; \sum_{j \in [n]} \norm{ \ket{v_{x j}} }^2
 \enspace .
\end{split}\end{equation}
\end{lemma}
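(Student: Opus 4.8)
The statement is an instance of semidefinite-programming duality: Eq.~\eqnref{e:spanprogramSDPCholesky} is, after a Gram decomposition, the dual of the SDP whose value is $\ADVpm(f)$. So the plan is to put both programs in standard form and invoke strong duality. First I would record an explicit SDP form of the left-hand side. The ratio $\norm{\Gamma}/\max_{j}\norm{\Gamma\circ\Delta_j}$ of \defref{t:adversarydef} is invariant under scaling $\Gamma$, and for a boolean $f$ an adversary matrix is supported only on the $F_0\times F_1$ and $F_1\times F_0$ blocks. Writing $J$ for the matrix that is $1$ on those blocks and $0$ elsewhere, the standard reformulation expresses $\ADVpm(f)$ as the maximum of the \emph{linear} objective $\langle J,\Gamma\rangle$ over symmetric adversary matrices $\Gamma$ subject to $\Gamma\circ\Delta_j\preceq\operatorname{diag}(p)$ for every $j\in[n]$, where $p$ ranges over probability vectors on $\D$. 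The single vector $p$ that must simultaneously dominate all $n$ masked matrices is what the $\max_j$ in the denominator becomes under duality (an $\ell_\infty$-to-$\ell_1$ move), and its weights $p_x$ are the ``budget'' that the $\max_{x}$ in Eq.~\eqnref{e:spanprogramSDPCholesky} will dualize to; this is now a genuine SDP.

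Next I would dualize it. Conic duality attaches a positive semidefinite multiplier $X_j$ to each matrix inequality $\operatorname{diag}(p)-\Gamma\circ\Delta_j\succeq0$ and a scalar to the normalization $\mathbf{1}^{\top}p=1$. Since the only free entries of $\Gamma$ lie on the $F_0\times F_1$ block, stationarity in $\Gamma$ forces $\sum_j X_j\circ\Delta_j$ to be the constant $1$ on that block, and stationarity in $p$ bounds $(\sum_j X_j)_{xx}$ by the dual scalar. A Gram (Cholesky) decomposition, writing $(X_j)_{xy}=\braket{v_{xj}}{v_{yj}}$ for vectors $\ket{v_{xj}}\in\C^m$, then turns the first condition into $\sum_{j:x_j\neq y_j}\braket{v_{xj}}{v_{yj}}=1$ for all $(x,y)\in F_0\times F_1$ and the dual objective into $\max_{x\in\D}\sum_{j\in[n]}\norm{\ket{v_{xj}}}^2$, which is exactly the right-hand side of Eq.~\eqnref{e:spanprogramSDPCholesky}.

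Finally I would invoke strong duality and check attainment. Slater's condition holds on both sides---a small multiple of a generic adversary matrix together with a uniform $p$ is strictly feasible for the maximization, and scaling all $\ket{v_{xj}}$ up makes the minimization strictly feasible---so the two optimal values agree. That the $\min$ in Eq.~\eqnref{e:spanprogramSDPCholesky} is attained, rather than being only an infimum over the unbounded dimension $m\in\N$, follows because an optimal dual $\{X_j\}$ exists (the normalized feasible set is compact) and its Gram factors may be taken with $m\le n\abs{\D}$. The part I expect to be the real work is the bookkeeping in the middle step: carrying the Hadamard masks $\circ\,\Delta_j$ and the $F_0/F_1$ block structure through the conic dual and the Gram decomposition so that the single bilinear constraint and the objective emerge in precisely the normalization of Eq.~\eqnref{e:spanprogramSDPCholesky}. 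Checking Slater's condition and the rank bound for attainment is routine by comparison.
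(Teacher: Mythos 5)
Your proposal is the standard SDP-duality derivation and matches the approach behind the paper's statement: the paper itself gives no proof of \lemref{t:spanprogramSDPconstruction}, citing it from \cite{Reichardt09spanprogram}, where the argument is exactly this dualization (recast $\ADVpm(f)$ as an SDP in $(\Gamma,p)$ using the $F_0\times F_1$ block support, attach PSD multipliers $X_j$ to the constraints $\operatorname{diag}(p)-\Gamma\circ\Delta_j\succeq 0$, Gram-decompose, and invoke strong duality). One side remark is off but harmless: scaling all $\ket{v_{xj}}$ up changes the sums $\sum_{j:x_j\neq y_j}\braket{v_{xj}}{v_{yj}}$ and so does not give strict feasibility of the minimization (one should instead perturb the Gram matrices $X_j$ by $\epsilon\identity$, which leaves the off-diagonal constraint entries untouched); fortunately that direction of Slater is not needed, since strict feasibility of the maximization, which you verify correctly, already gives zero gap and attainment of the minimum.
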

\noindent
Based on a feasible solution to this SDP with objective value $W (\geq 1)$, we will give three algorithms for evaluating~$f$, each with query complexity $O(W)$.  (A feasible solution corresponds to a span program in canonical form, and its value equals the span program witness size~\cite{KarchmerWigderson93span, Reichardt09spanprogram}.)  

Let $I = [n] \times \{0,1\} \times [m]$.  Let $\ket t \in \C^{F_0}$ and $A \in \L(\C^{F_0}, \C^I)$ be given by 
\begin{equation}\begin{split} \label{e:tAdefs}
\ket t &= \frac 1 {3 \sqrt W} \sum_{x \in F_0} \ket x \\
A &= \sum_{x \in F_0, j \in [n]} \ketbra{x}{j, \bar x_j} \otimes \bra{v_{x j}}
 \enspace .
\end{split}\end{equation}
Let $G$ be the weighted bipartite graph with biadjacency matrix $\biadj_G \in \L(\C^{\{0\}} \oplus \C^I, \C^{F_0})$: 
\begin{equation} \label{e:Gadjacencymatrix}
\biadj_G = \left( \begin{matrix} \ket t & A \end{matrix} \right)
 \enspace .
\end{equation}
That is, $G$ has a vertex for each row or column of~$\biadj_G$; its vertex set is the disjoint union~$F_0 \cup \{0\} \cup I$.  Edges from~$F_0$ to $\{0\} \cup I$ are weighted by the matrix entries.  The weighted adjacency matrix of $G$ is 
\begin{equation}
A_G = \left( \begin{matrix} 0 & \biadj_G \\ \biadj_G^\dagger & 0 \end{matrix} \right)
 \enspace .
\end{equation}

Let $\Delta \in \L(\C^{F_0 \cup \{0\} \cup I})$ be the orthogonal projection onto the span of all eigenvalue-zero eigenvectors of~$A_G$.  For an input $x \in \D$, let $\Pi_x \in \L(\C^{F_0 \cup \{0\} \cup I})$ be the projection 
\begin{equation}
\Pi_x = \identity - \sum_{j \in [n], k \in [m]} \ketbra{j, \bar x_j, k}{j, \bar x_j, k}
 \enspace .
\end{equation}
That is, $\Pi_x$ is a diagonal matrix that projects onto all vertices except those associated to the input bit complements~$\bar x_j$.  Finally, let 
\begin{equation}
U_x = (2 \Pi_x - \identity) (2 \Delta - \identity)
 \enspace .
\end{equation}
$U_x$ consists of the alternating reflections $2 \Delta - \identity$ and $2 \Pi_x - \identity$.  The first reflection does not depend on the input~$x$.  The second reflection can be implemented using a single call to the input oracle~$O_x$.  

We present three related algorithms, each slightly simpler than the one before: 

\def\algbox#1{\begin{center}\fbox{ \begin{minipage}[l]{5.50in}#1\end{minipage} }\end{center}}

\algbox{
\noindent {\bf Algorithm 1:}
\begin{enumerate}
\item Prepare the initial state $\ket 0 \in \C^{F_0 \cup \{0\} \cup I}$.  
\item Run phase estimation on $U_x$, with precision $\delta_p = \frac 1 {100W}$ and error rate $\delta_e = \frac{1}{10}$.  
\item Output $1$ if the measured phase is zero.  Otherwise output $0$.
\end{enumerate}
}

\algbox{
\noindent {\bf Algorithm 2:}
\begin{enumerate}
\item Prepare the initial state $\frac{1}{\sqrt 2}(\ket 0 + \ket 1) \otimes \ket 0 \in \C^2 \otimes \C^{F_0 \cup \{0\} \cup I}$.  
\item Pick $T \in [ \lceil 100W \rceil ]$ uniformly at random.  Apply the controlled unitary $\ketbra 0 0 \otimes \identity + \ketbra 1 1 \otimes U_x^T$.  
\item Measure the first register in the basis $\frac{1}{\sqrt 2}(\ket 0 \pm \ket 1)$.  Output $1$ if the measurement result is $\frac{1}{\sqrt 2}(\ket 0 + \ket 1)$, and output $0$ otherwise.  
\end{enumerate}
}

\algbox{
\noindent {\bf Algorithm 3:}
\begin{enumerate}
\item Prepare the initial state $\ket 0 \in \C^{F_0 \cup \{0\} \cup I}$.  
\item Pick $T \in [ \lceil 10^5 W \rceil ]$ uniformly at random.  Apply $U_x^T$.  
\item Measure the vertex.  Output $1$ if the measurement result is $\ket 0$, and output $0$ otherwise.
\end{enumerate}
}

Phase estimation on a unitary $V$ with precision $\delta_p$ and error rate $\delta_e$ can be implemented using $O\big(\frac 1 {\delta_p} \log \frac 1 {\delta_e}\big)$ controlled applications of $V$~\cite{NagajWocjanZhang09qma}, so the first algorithm has $O(W)$ query complexity.  The second algorithm essentially applies a simplified version of phase estimation.  Intuitively, it works because it suffices to distinguish zero from nonzero phase.  The third algorithm does away with any phase estimation.  Intuitively, this is possible because $U_x$ is the product of two reflections, so its spectrum is symmetrical.  The second and third algorithms clearly have $O(W)$ query complexity.  The factor of $10^5$ in the third algorithm's query complexity can be reduced by three orders of magnitude by adjusting downward the scaling factor for $\ket t$ in Eq.~\eqnref{e:tAdefs}.  

The time, or number of elementary operations, required to implement the reflection $2 \Delta - \identity$ is unclear.  In practice it may still be preferable to use the potentially less query-efficient quantum walk algorithm from~\cite{Reichardt09spanprogram}, as done for evaluating formulas in~\cite{Reichardt09unbalancedformula, ReichardtSpalek08spanprogram, Reichardt09andorfaster, AmbainisChildsReichardtSpalekZhang07andor}.   

In the following section, we will show that all three algorithms correctly evaluate $f(x)$, with constant gaps between the soundness error and completeness parameters.

\section{Analysis of the algorithms}

To analyze the above algorithms, we shall study the spectrum of the unitary $U_x = (2 \Pi_x - \identity) (2 \Delta - \identity)$.  

For this purpose, it will be useful to introduce two new graphs, following~\cite{ReichardtSpalek08spanprogram, Reichardt09spanprogram}.  Let~$\Pim(x) = \sum_{j \in [n]} \ketbra j j \otimes \ketbra{\bar x_j}{\bar x_j} \otimes \identity_{\C^{[m]}} \in \L(\C^I)$, and let $G(x)$ and $G'(x)$ be the weighted bipartite graphs with biadjacency matrices 
\begin{equation} \label{e:Gxadjacencymatrix}
B_{G(x)} = \left( \begin{matrix} \ket t & A \\ 0 & \Pim(x) \end{matrix} \right)
\qquad \text{and} \qquad
B_{G'(x)} = \left( \begin{matrix} A \\ \Pim(x) \end{matrix} \right)
 \enspace .
\end{equation}

Based on the constraints of the SDP in \lemref{t:spanprogramSDPconstruction}, we can immediately construct eigenvalue-zero eigenvectors for $G(x)$ or $G'(x)$, depending on whether $f(x) = 1$ or $f(x) = 0$: 

\begin{lemma} \label{t:eigenvaluezeroGx}
If $f(x) = 1$, let $\ket \psi = -3 \sqrt W \ket 0 + \sum_{j \in [n]} \ket{j, x_j} \otimes \ket{v_{xj}} \in \C^{\{0\} \cup I}$.  Then $B_{G(x)} \ket \psi = 0$ and $\abs{ \braket 0 \psi }^2 / \norm{\ket \psi}^2 \geq 9/10$.  

If $f(x) = 0$, let $\ket \psi = -\ket x + \sum_{j \in [n]} \ket{j, \bar x_j} \otimes \ket{v_{xj}} \in \C^{F_0 \cup I}$.  Then $B_{G'(x)}^\dagger \ket \psi = 0$ and $\abs{ \braket t \psi }^2 / \norm{\ket \psi}^2 \geq 1 / (9 W (W + 1))$.  
\end{lemma}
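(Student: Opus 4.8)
The plan is a direct verification. Expand the block forms of $B_{G(x)}$ and $B_{G'(x)}$ from \eqnref{e:Gxadjacencymatrix} together with the definitions of $\ket t$ and $A$ from \eqnref{e:tAdefs}, apply them to the stated vector $\ket\psi$, and simplify using the two properties of the feasible SDP solution of \lemref{t:spanprogramSDPconstruction}: the inner-product constraint $\sum_{j:\,x_j\neq y_j}\braket{v_{xj}}{v_{yj}}=1$ for every $(x,y)\in F_0\times F_1$, and the objective bound $\sum_{j\in[n]}\norm{\ket{v_{xj}}}^2\le W$ for every $x\in\D$. The point is that the first property is exactly what forces $\ket\psi$ into the kernel, and the second is exactly what bounds the overlap.

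\textbf{Case $f(x)=1$.} Here $x\in F_1$. Expand $B_{G(x)}\ket\psi$ blockwise: the $\C^{F_0}$-block is $-3\sqrt W\,\ket t + A\sum_{j}\ket{j,x_j}\otimes\ket{v_{xj}}$. Now $-3\sqrt W\,\ket t=-\sum_{y\in F_0}\ket y$, and applying $A$ picks out, for each $y\in F_0$, the scalar $\sum_{j:\,y_j\neq x_j}\braket{v_{yj}}{v_{xj}}$ (the overlap $\braket{\bar y_j}{x_j}$ equals $1$ only when $y_j\neq x_j$), which equals $1$ by the SDP constraint for $(y,x)\in F_0\times F_1$; so this block is $-\sum_y\ket y+\sum_y\ket y=0$. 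The $\C^I$-block is $\Pim(x)\sum_j\ket{j,x_j}\otimes\ket{v_{xj}}$, which vanishes because $\Pim(x)$ projects the middle tensor factor onto $\ket{\bar x_j}$ while the vector carries $\ket{x_j}$ there. Thus $B_{G(x)}\ket\psi=0$. For the overlap, $\abs{\braket 0\psi}^2=9W$ while $\norm{\ket\psi}^2=9W+\sum_j\norm{\ket{v_{xj}}}^2\le 10W$, giving the ratio $\ge 9/10$.

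\textbf{Case $f(x)=0$.} Here $x\in F_0$, and $B_{G'(x)}^\dagger=\bigl(A^\dagger\;\;\Pim(x)\bigr)$. On $\ket\psi=-\ket x+\sum_j\ket{j,\bar x_j}\otimes\ket{v_{xj}}$, the term $A^\dagger(-\ket x)$ equals $-\sum_j\ket{j,\bar x_j}\otimes\ket{v_{xj}}$ (only the $y=x$ summand of $A^\dagger$ contributes), and $\Pim(x)$ now acts as the identity on each $\ket{j,\bar x_j}\otimes\ket{v_{xj}}$ since the middle factor already carries $\ket{\bar x_j}$; the two pieces cancel, so $B_{G'(x)}^\dagger\ket\psi=0$. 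For the overlap, only the $\C^{F_0}$-component $-\ket x$ of $\ket\psi$ meets $\ket t$, so $\braket t\psi=-\tfrac1{3\sqrt W}$ and $\abs{\braket t\psi}^2=\tfrac1{9W}$; combined with $\norm{\ket\psi}^2=1+\sum_j\norm{\ket{v_{xj}}}^2\le 1+W$ this yields the ratio $\ge\tfrac1{9W(W+1)}$.

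There is no real obstacle: the lemma is bookkeeping that manufactures the eigenvalue-zero eigenvectors later used to analyze $U_x$. The only mild care needed is keeping track of which tensor factor $\Pim(x)$ acts on — so that the two cases differ only in whether $\braket{x_j}{\bar x_j}=0$ kills a term or $\braket{\bar x_j}{\bar x_j}=1$ preserves it — and recalling that $\ket t$ is supported only on $\C^{F_0}$, which is why the $\C^I$-part of $\ket\psi$ drops out of $\braket t\psi$ in the second case.
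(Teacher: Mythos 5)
Your verification is correct and matches what the paper intends: the paper states this lemma without a written proof precisely because it follows "immediately" from the SDP constraints, and your blockwise expansion — the constraint $\sum_{j:y_j\neq x_j}\braket{v_{yj}}{v_{xj}}=1$ cancelling against $-3\sqrt W\ket t$, the $\Pim(x)$ factor killing or preserving terms according to whether the middle register holds $\ket{x_j}$ or $\ket{\bar x_j}$, and the objective bound $\sum_j\norm{\ket{v_{xj}}}^2\le W$ giving the overlap ratios $9W/(10W)$ and $(1/9W)/(1+W)$ — is exactly that routine check, carried out correctly.
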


Let us recall from~\cite{Reichardt09spanprogram}: 

\begin{theorem}[{\cite[Theorem~8.7]{Reichardt09spanprogram}}] \label{t:bipartitepsdreduction}
Let $G'$ be a weighted bipartite graph with biadjacency matrix $\biadj_{G'} \in \L(\C^U, \C^T)$.  Assume that $\delta > 0$ and $\ket t, \ket \psi \in \C^T$ satisfy $B_{G'}^\dagger \ket \psi = 0$ and $\abs{\braket t \psi}^2 \geq \delta \norm{\ket \psi}^2$.  

Let $G$ be the same as $G'$ except with a new vertex, $0$, added to the $U$ side, with outgoing edges weighted by the entries of $\ket t$.  That is, the biadjacency matrix of $G$ is 
\begin{equation}
\biadj_G = \left( \begin{matrix} \ket t & \biadj_{G'} \end{matrix} \right)
\place{{\small $0$}}{-65mu}{12pt}
\place{{\small U}}{-27mu}{12pt}
\place{{\small T}}{3mu}{0pt}
\end{equation}
Let $\{ \ket \alpha \}$ be a complete set of orthonormal eigenvectors of the weighted adjacency matrix $A_G$, with corresponding eigenvalues $\rho(\alpha)$.  Then for all $\gamma \geq 0$, the squared length of the projection of $\ket 0$ onto the span of the eigenvectors $\alpha$ with $\abs{\rho(\alpha)} \leq \gamma$ satisfies 
\begin{equation} \label{e:bipartitepsdreduction}
\sum_{\alpha : \, \abs{\rho(\alpha)} \leq \gamma} \abs{\braket \alpha 0}^2 \leq 8 \gamma^2 / \delta
 \enspace .
\end{equation}
\end{theorem}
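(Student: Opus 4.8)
\emph{Proof idea.}
The plan is to derive the bound from a single algebraic identity: once the vertex $0$ is adjoined, the kernel vector $\ket\psi$ of $\biadj_{G'}^\dagger$ becomes, up to the scalar $\braket t\psi$, a preimage of $\ket 0$ under $A_G$, and a vector mapped onto $\ket 0$ by $A_G$ cannot place much weight on the small-eigenvalue eigenspaces of $A_G$. To set this up, decompose the vertex space of $G$ as $\C^T \oplus \C^{\{0\}\cup U}$ and write $A_G = \left(\begin{smallmatrix} 0 & \biadj_G \\ \biadj_G^\dagger & 0 \end{smallmatrix}\right)$ accordingly, identifying $\ket\psi$ with $\ket\psi \oplus 0$ and $\ket 0$ with $0 \oplus \ket 0$. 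Since $\biadj_G = \left(\begin{smallmatrix} \ket t & \biadj_{G'} \end{smallmatrix}\right)$, the vector $\biadj_G^\dagger\ket\psi$ has $\{0\}$-component $\braket t\psi$ and $U$-component $\biadj_{G'}^\dagger\ket\psi = 0$, so
\begin{equation}
A_G \ket\psi \;=\; \braket t\psi \,\ket 0 .
\end{equation}

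The rest is a one-line spectral argument. Assume $\braket t\psi \neq 0$ (otherwise $\delta > 0$ forces $\ket\psi = 0$, leaving nothing to exploit). Let $P_\gamma$ be the orthogonal projection onto $\Span\{\ket\alpha : \abs{\rho(\alpha)} \le \gamma\}$; the quantity to bound is $\norm{P_\gamma\ket 0}^2$. As $P_\gamma$ is a spectral projection of $A_G$ it commutes with $A_G$, whence $\braket t\psi \cdot P_\gamma\ket 0 = A_G P_\gamma\ket\psi$. Since $A_G$ has norm at most $\gamma$ on $\Range P_\gamma$ and $\norm{P_\gamma\ket\psi} \le \norm{\ket\psi}$, taking norms gives $\abs{\braket t\psi}\,\norm{P_\gamma\ket 0} \le \gamma\,\norm{\ket\psi}$; squaring, dividing by $\abs{\braket t\psi}^2$, and using $\norm{\ket\psi}^2 \le \abs{\braket t\psi}^2/\delta$ yields $\norm{P_\gamma\ket 0}^2 \le \gamma^2/\delta$ --- the claim, in fact with constant $1$ in place of $8$.

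Because the argument is this short, I expect no genuine obstacle beyond bookkeeping. The two points to get right are: (i) $\ket 0$ and $\ket\psi$ live on \emph{opposite} sides of the bipartition, which is exactly why adjoining the single column $\ket t$ couples them under $A_G$; and (ii) passing from $\biadj_{G'}$ to $\biadj_G$ adds a vertex only on the $U$-side, so $\biadj_{G'}^\dagger\ket\psi = 0$ is inherited verbatim as the $U$-block of $\biadj_G^\dagger\ket\psi$. The slack between the clean constant $1$ and the stated $8$ presumably reflects that the cited proof also tolerates $\ket\psi$ being merely an approximate kernel vector; the exact-kernel version here is all that is needed to feed the eigenvectors of \lemref{t:eigenvaluezeroGx} into the theorem.
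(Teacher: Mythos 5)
Your argument is correct, and the comparison here is a bit unusual: this paper never proves \thmref{t:bipartitepsdreduction} itself, it imports it verbatim from \cite{Reichardt09spanprogram} (Theorem~8.7), where the proof proceeds by a longer analysis of the small-eigenvalue eigenvectors of the bipartite adjacency matrix and picks up the constant $8$. Your route is genuinely more direct: the identity $A_G\ket\psi = \braket t\psi\,\ket 0$ is exactly right (the $U$-block of $B_G^\dagger\ket\psi$ is $B_{G'}^\dagger\ket\psi = 0$ and the $\{0\}$-block is $\braket t\psi$), and since the spectral projection $P_\gamma$ onto $\Span\{\ket\alpha : \abs{\rho(\alpha)}\le\gamma\}$ commutes with $A_G$ and $\norm{A_G P_\gamma}\le\gamma$, one gets $\abs{\braket t\psi}\,\norm{P_\gamma\ket 0} = \norm{A_G P_\gamma\ket\psi} \le \gamma\norm{\ket\psi}$, hence $\norm{P_\gamma\ket 0}^2 \le \gamma^2/\delta$ --- the stated bound with constant $1$ in place of $8$. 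This is essentially the now-standard ``effective spectral gap'' derivation; since it is strictly stronger, it could replace the citation and would only improve the constants fed into \lemref{t:effectivespectralgap} and \lemref{t:completenesssoundness}. One caveat, which is a defect of the quoted statement rather than of your proof: if $\ket\psi = 0$ the hypotheses hold vacuously but the conclusion can fail (take $\gamma = 0$ with $\ket 0$ in the kernel of $A_G$), so the theorem implicitly assumes a nonzero witness $\ket\psi$; you flagged this correctly, and in the paper's application \lemref{t:eigenvaluezeroGx} supplies such a $\ket\psi$, so nothing is lost.
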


Substituting \lemref{t:eigenvaluezeroGx} into \thmref{t:bipartitepsdreduction}, we thus obtain the key statement: 

\begin{lemma} \label{t:effectivespectralgap}
If $f(x) = 1$, then $A_{G(x)}$ has an eigenvalue-zero eigenvector $\ket \psi$, supported on the column vertices, with 
\begin{equation} \label{e:eigenvaluezero}
\frac{ \abs{ \braket 0 \psi }^2 }{\norm{\ket \psi}^2} \geq \frac{9}{10}
 \enspace .
\end{equation}

If $f(x) = 0$, let $\{ \ket \alpha \}$ be a complete set of orthonormal eigenvectors  of $A_{G(x)}$ with corresponding eigenvalues $\rho(\alpha)$.  Then for any $c \geq 0$, 
\begin{equation} \label{e:effectivespectralgap}
\sum_{\alpha : \abs{\rho(\alpha)} \leq c/W} \abs{ \braket \alpha 0 }^2 \leq 72 \Big(1 + \frac 1 W\Big) c^2
 \enspace .
\end{equation}
\end{lemma}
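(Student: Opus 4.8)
The plan is to derive both halves of the lemma by feeding the explicit eigenvectors from \lemref{t:eigenvaluezeroGx} into \thmref{t:bipartitepsdreduction}, after matching up the two graph constructions.

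First, the easy case $f(x) = 1$. Here \lemref{t:eigenvaluezeroGx} directly supplies a vector $\ket\psi \in \C^{\{0\} \cup I}$ with $B_{G(x)} \ket\psi = 0$ and $\abs{\braket 0 \psi}^2 / \norm{\ket\psi}^2 \geq 9/10$. Since $B_{G(x)}$ is the biadjacency matrix of $G(x)$ with rows indexed by $F_0$ and columns indexed by $\{0\} \cup I$, the relation $B_{G(x)} \ket\psi = 0$ says exactly that the vector $0 \oplus \ket\psi$ (zero on the $F_0$ side) is an eigenvalue-zero eigenvector of $A_{G(x)}$, supported on the column vertices. That is Eq.~\eqnref{e:eigenvaluezero}, so nothing further is needed.

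For the case $f(x) = 0$, I would apply \thmref{t:bipartitepsdreduction} with the identification $G' = G'(x)$, $U = I$, $T = F_0 \cup \{0\} \cup \ldots$—more precisely, note that adjoining the vertex $0$ with outgoing edge-weights $\ket t$ to the $U$-side of $G'(x)$ yields precisely $G(x)$: comparing Eq.~\eqnref{e:Gxadjacencymatrix} and Eq.~\eqnref{e:Gadjacencymatrix}-style block structure, $B_{G(x)} = \bigl(\begin{smallmatrix} \ket t & A \\ 0 & \Pim(x)\end{smallmatrix}\bigr)$ is exactly $\bigl(\ket{t'} \; B_{G'(x)}\bigr)$ with $\ket{t'} = \ket t \oplus 0 \in \C^{F_0} \oplus \C^I$, the $T$-side space being $\C^{F_0 \oplus I}$. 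The hypothesis $B_{G'(x)}^\dagger \ket\psi = 0$ with $\abs{\braket t \psi}^2 \geq \delta \norm{\ket\psi}^2$ is supplied by the second half of \lemref{t:eigenvaluezeroGx} with $\delta = 1/(9W(W+1))$. \thmref{t:bipartitepsdreduction} then gives, for every $\gamma \geq 0$,
\begin{equation}
\sum_{\alpha : \abs{\rho(\alpha)} \leq \gamma} \abs{\braket\alpha 0}^2 \leq 8\gamma^2/\delta = 72\, W(W+1)\, \gamma^2 .
\end{equation}
Setting $\gamma = c/W$ turns the right-hand side into $72\,W(W+1)\,c^2/W^2 = 72\,(1 + 1/W)\,c^2$, which is Eq.~\eqnref{e:effectivespectralgap}.

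The only genuine obstacle is bookkeeping: one must verify that the vertex-set and block-matrix conventions of Eq.~\eqnref{e:Gxadjacencymatrix} line up with the "add a vertex to the $U$-side" operation of \thmref{t:bipartitepsdreduction}, so that the $\ket 0$ appearing in the conclusion of the theorem is the same basis vector $\ket 0$ used in the algorithms, and that the zero padding of $\ket t$ on the $I$-block does not affect $\abs{\braket t \psi}$. Once that identification is made, the rest is the one-line substitution of constants above; no spectral analysis of $A_{G(x)}$ beyond what \thmref{t:bipartitepsdreduction} already provides is required.
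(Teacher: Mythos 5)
Your proposal is correct and follows essentially the same route as the paper, which likewise obtains the lemma by substituting the explicit eigenvectors of \lemref{t:eigenvaluezeroGx} (with $\delta = 1/(9W(W+1))$) into \thmref{t:bipartitepsdreduction} and setting $\gamma = c/W$, after identifying $G(x)$ as $G'(x)$ with the vertex $0$ adjoined via $\ket t$. The only nitpick is that the rows of $B_{G(x)}$ are indexed by $F_0$ together with the extra copy of $I$, not by $F_0$ alone, but this does not affect your argument.
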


By choosing $c$ a small positive constant, Eq.~\eqnref{e:effectivespectralgap} gives an $O(1/W)$ ``effective spectral gap" for eigenvectors of $A_{G(x)}$ supported on $\ket 0$; it says that $\ket 0$ has small squared overlap on the subspace of~$O(1/W)$-eigenvalue eigenvectors.  

So far, we have merely repeated arguments from~\cite{Reichardt09spanprogram}.  The main step in the analysis of our new algorithms is to translate \lemref{t:effectivespectralgap} into analogous statements for $U_x$: 

\begin{lemma} \label{t:completenesssoundness}
If $f(x) = 1$, then $U_x$ has an eigenvalue-one eigenvector $\ket \varphi$ with 
\begin{equation}
\frac{ \abs{ \braket 0 \varphi }^2 }{\norm{\ket \varphi}^2} \geq \frac{9}{10}
 \enspace .
\end{equation}

If $f(x) = 0$, let $\{ \ket \beta \}$ be a complete set of orthonormal eigenvectors  of $U_x$ with corresponding eigenvalues $e^{i \theta(\beta)}$, $\theta(\beta) \in (-\pi, \pi]$.  Then for any $\Theta \geq 0$, 
\begin{equation}
\sum_{\beta : \abs{\theta(\beta)} \leq \Theta} \abs{\braket \beta 0}^2 \leq \Big(2 \sqrt{6 \Theta W} + \frac \Theta 2\Big)^2
 \enspace .
\end{equation}
\end{lemma}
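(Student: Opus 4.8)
The plan is to relate the spectrum of the unitary $U_x = (2\Pi_x - \identity)(2\Delta - \identity)$ to the spectrum of the Hermitian operator $A_{G(x)}$ near zero, using the standard correspondence between a product of two reflections and the principal angles between the two subspaces being reflected about. Concretely, $U_x$ is a product of reflections about $\Range(\Delta)$ (the kernel of $A_G$) and about $\Range(\Pi_x)$; the eigenphases $\theta(\beta)$ of $U_x$ are governed, via Jordan's lemma, by the principal angles between these two subspaces, decomposing the Hilbert space into $U_x$-invariant one- and two-dimensional pieces. The graph $G(x)$ and $G'(x)$ enter because, as in~\cite{Reichardt09spanprogram}, the kernel of $A_G$ intersected with $\Range(\Pi_x)$ can be computed in terms of kernel eigenvectors of $A_{G(x)}$, and small but nonzero eigenvalues of $A_{G(x)}$ correspond to small principal angles, hence to small eigenphases of $U_x$.

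For the completeness case $f(x)=1$: \lemref{t:effectivespectralgap} gives an eigenvalue-zero eigenvector $\ket\psi$ of $A_{G(x)}$ supported on the column vertices with $\abs{\braket 0 \psi}^2/\norm{\ket\psi}^2 \geq 9/10$. I would check that such a vector, extended by zero off the support of $B_{G(x)}$, lies in $\Range(\Delta)$ (it is a zero-eigenvector of $A_G$ because $\ket\psi$ is supported away from the $\bar x_j$ vertices, which is exactly where $B_{G(x)}$ and $B_G$ differ) and also in $\Range(\Pi_x)$ (again because its support avoids the $\bar x_j$ coordinates). Being fixed by both reflections, it is fixed by $U_x$, so $\ket\varphi := \ket\psi$ is an eigenvalue-one eigenvector of $U_x$, and the overlap bound transfers verbatim. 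The one point requiring care is that the ``$j,x_j$'' labels in \lemref{t:eigenvaluezeroGx} versus the ``$j,\bar x_j$'' labels defining $\Pi_x$ match up correctly; this is a bookkeeping check.

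For the soundness case $f(x)=0$: this is the substance of the lemma and the main obstacle. The effective spectral gap statement Eq.~\eqnref{e:effectivespectralgap} controls $\sum_{\alpha:\abs{\rho(\alpha)}\le c/W}\abs{\braket\alpha 0}^2$ for $A_{G(x)}$, and I need to convert this into a bound on $\sum_{\beta:\abs{\theta(\beta)}\le\Theta}\abs{\braket\beta 0}^2$ for $U_x$. The strategy is: within each Jordan two-dimensional invariant subspace, relate the eigenphase $\theta$ of $U_x$ to the corresponding small eigenvalue $\rho$ of $A_{G(x)}$ — I expect a relation of the form $\abs\rho \approx (\abs\theta/2)\cdot(\text{scale})$, where the scale is set by the norm of the relevant block of $B_{G(x)}$, which is $O(1)$; the key quantitative link is that a principal angle $\phi$ between $\Range(\Delta)$ and $\Range(\Pi_x)$ produces eigenphase $\approx 2\phi$ in $U_x$ and eigenvalue $\approx \sin\phi$ (up to the $O(1)$ normalization) in $A_{G(x)}$, so $\abs{\theta(\beta)}\le\Theta$ forces the associated $A_{G(x)}$-eigenvalue into a band of width $O(\Theta)$. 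Substituting $c/W \sim \Theta$, i.e. $c \sim \Theta W$, into Eq.~\eqnref{e:effectivespectralgap} yields a bound of order $\Theta W$, whose square root $\sqrt{\Theta W}$ is the dominant term $2\sqrt{6\Theta W}$ in the claimed estimate; the additive $\Theta/2$ term presumably absorbs the boundary/rounding contribution and the piece of $\ket 0$ lying in one-dimensional invariant subspaces with eigenphase $0$ or $\pi$ that is not already captured by the $A_{G(x)}$ kernel. I would set this up by writing $\ket 0$ in the Jordan decomposition, grouping the two-dimensional blocks with small $\theta$, applying Cauchy–Schwarz/triangle inequality to split off the boundary term, and then invoking Eq.~\eqnref{e:effectivespectralgap} on the bulk. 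The delicate part is getting the constants and the exact form of the angle-to-eigenvalue dictionary right so that the final bound comes out as the stated $\big(2\sqrt{6\Theta W} + \Theta/2\big)^2$ rather than merely $O((\Theta W)^{1/2} + \Theta)^2$.
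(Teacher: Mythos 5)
Your completeness argument ($f(x)=1$) is essentially the paper's: the zero-eigenvector $\ket\psi$ of $B_{G(x)}$ is supported away from the $(j,\bar x_j)$ coordinates and off the extra row vertices of $G(x)$, hence is fixed by both $\Delta$ and $\Pi_x$ and so by $U_x$, and the overlap bound carries over. That half is fine.

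The soundness case, however, has a genuine gap: the mechanism you propose does not exist. You assume a per-block ``dictionary'' in which each Jordan two-dimensional invariant subspace of $(\Delta,\Pi_x)$ carries a single small eigenvalue of $A_{G(x)}$, with eigenphase $\theta$ of $U_x$ corresponding to an $A_{G(x)}$-eigenvalue of size $O(\theta)$. But $\Delta$ projects onto the kernel of $A_G$, not of $A_{G(x)}$; $A_{G(x)}$ acts on a strictly larger vertex set (the extra row copy of $I$), and its eigenvectors are in general not aligned with the Jordan blocks of $(\Delta,\Pi_x)$, so ``$\abs{\theta(\beta)}\leq\Theta$ forces the associated $A_{G(x)}$-eigenvalue into a band of width $O(\Theta)$'' is not a statement you can make. (A symptom that the accounting is off: if such a dictionary held, plugging $c\sim\Theta W$ into Eq.~\eqnref{e:effectivespectralgap} would bound the \emph{squared} overlap by $O((\Theta W)^2)$, which is stronger than the claimed $O(\Theta W)$; the actual lemma is weaker precisely because no exact correspondence exists.) The paper's route is indirect and hinges on the identity $A_{G(x)}\Delta = T(\identity-\Pi_x)$ with $T$ a permutation: writing $\ket{\hat\zeta}$ for the normalized projection of $\ket 0$ onto the small-phase eigenspace of $U_x$, the Jordan structure gives $\norm{A_{G(x)}\Delta\ket{\hat\zeta}}\leq(\Theta/2)\norm{\Delta\ket{\hat\zeta}}$, so $\ket w=\Delta\ket{\hat\zeta}/\norm{\Delta\ket{\hat\zeta}}$ is only an \emph{approximate} zero-eigenvector of $A_{G(x)}$; one then splits $\ket w$ at an eigenvalue threshold $d\Theta/2$, bounds the small-eigenvalue part by Eq.~\eqnref{e:effectivespectralgap} with $c=d\Theta W/2$ and the large-eigenvalue part by a Markov-type bound $\norm{\ket{w_{\mathrm{big}}}}\leq 1/d$, and optimizes $d=1/\sqrt{6\Theta W}$, which is where the $2\sqrt{6\Theta W}$ comes from. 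The $\Theta/2$ term is not a boundary or one-dimensional-subspace correction as you guess; it is the separate term $\norm{\Pi_x(\identity-\Delta)\ket{\hat\zeta}}\leq\Theta/2$ in the split $\braket 0{\hat\zeta}=\bra 0\Delta\ket{\hat\zeta}+\bra 0\Pi_x(\identity-\Delta)\ket{\hat\zeta}$. Finally, you must show the phase-zero eigenvectors of $U_x$ contribute nothing at all (via Eq.~\eqnref{e:effectivespectralgap} at $c=0$ together with $\Pi_x\ket 0=\ket 0$); absorbing them into an additive $\Theta/2$ would make the bound false as $\Theta\to 0$. None of this machinery is present in your sketch, so the central step of the lemma remains unproved.
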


Assuming \lemref{t:completenesssoundness}, the algorithms from \secref{s:algorithm} are both complete and sound.  
If $f(x) = 1$, then the first, phase-estimation-based algorithm outputs $1$ with probability at least $9/10 - \delta_e = 4/5$.  If $f(x) = 0$, then setting $\Theta = \delta_p = \frac 1 {100W}$, the algorithm outputs $1$ with probability at most $\delta_e + (2 \sqrt{6 \Theta W} + \frac \Theta 2)^2  < 2/5$.  The probability the second algorithm outputs $1$ is the expectation versus $T$ of $\frac 1 4 \norm{(\identity + U_x^T) \ket 0}{}^2$.  If $f(x) = 1$, this is at least $9/10$ for all $T$.  If $f(x) = 0$, let $\tau = \lceil 100 W \rceil$ and simplify 
\begin{equation}\begin{split}
\Ex_{T \in_R [\tau]}\!\Big[ \frac 1 4 \norm{(\identity + U_x^T) \ket 0}{}^2 \Big]
&= \Ex_{T \in_R [\tau]}\!\Big[ \frac 1 4 \sum_\beta \abs{1 + e^{i \theta(\beta) T}}{}^2 \abs{\braket 0 \beta}^2 \Big] \\
&= \frac 1 4 \sum_\beta \abs{\braket 0 \beta}^2 \Big[2 + \frac 1 \tau \Big( \frac{e^{i \theta(\beta)(\tau+1)}-e^{-i \theta(\beta)\tau}}{e^{i \theta(\beta)}-1} - 1 \Big) \Big]
 \enspace .
\end{split}\end{equation}
Setting $\Theta = \frac 1 {50W}$ and $\xi = (2 \sqrt{6 \Theta W} + \frac \Theta 2)^2$, we see that the algorithm outputs $1$ with probability at most $\xi + (1-\xi)\big(\frac 1 2 + 1/(4 \tau \sin \frac \Theta 2)\big) < 88\%$ for $W \geq 1$.  
As its analysis requires more care, we defer consideration of the third algorithm to the end of this section.  

\smallskip

For the proof of \lemref{t:completenesssoundness} we will use the following characterization of the eigen-decomposition of the product of reflections, essentially due to Jordan~\cite{Jordan75projections}.  Its use is common in quantum computation, e.g.,~\cite{NagajWocjanZhang09qma, Szegedy04walkfocs, MarriottWatrous05qma}.  

\begin{lemma}
Given two projections $\Pi$ and $\Delta$, the Hilbert space can be decomposed into orthogonal one- and two-dimensional subspaces invariant under $\Pi$ and $\Delta$.  On the one-dimensional invariant subspaces, $(2 \Pi - \identity)(2 \Delta - \identity)$ acts as either $+1$ or $-1$.  Each two-dimensional subspace is spanned by an eigenvalue-$\lambda$ eigenvector $\ket v$ of $\Delta \Pi \Delta$, with $\lambda \in (0,1)$, and $\ket{v^\perp} = (\identity - \Delta) \Pi \ket v / \norm{(\identity - \Delta) \Pi \ket v}$.  Letting $\theta = 2 \arccos \sqrt \lambda \in (0, \pi)$, 
so $\Pi \ket v / \norm{\Pi \ket v} = \cos \frac \theta 2 \ket v + \sin \frac \theta 2 \ket {v^\perp}$, 
the eigenvectors and corresponding eigenvalues of $(2 \Pi - \identity)(2 \Delta - \identity)$ on this subspace are, respectively,  
\begin{equation}
\frac{ \ket v \mp i \ket{v^\perp} }{\sqrt 2} \qquad \text{and} \qquad e^{\pm i \theta}
 \enspace .
\end{equation}
\end{lemma}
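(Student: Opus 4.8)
The plan is to reduce the statement to a two-dimensional problem about a pair of rank-one projections and then reassemble the invariant subspaces. For the two-dimensional case, fix a unit eigenvector $\ket v$ of $\Delta\Pi\Delta$ with eigenvalue $\lambda\in(0,1)$; applying $\Delta$ to the eigen-equation and using $\lambda\neq0$ gives $\Delta\ket v=\ket v$. Since $\norm{\Pi\ket v}^2=\bra v\Pi\ket v=\bra v\Delta\Pi\Delta\ket v=\lambda$, the vector $\ket w:=\Pi\ket v/\sqrt\lambda$ is a unit vector with $\braket v w=\sqrt\lambda\in(0,1)$, so $\ket v$ and $\ket w$ are linearly independent and span a plane $S$. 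Using $\Delta\ket v=\ket v$, the identity $\Delta\Pi\ket v=\Delta\Pi\Delta\ket v=\lambda\ket v$, and $\Pi^2=\Pi$, one checks that both $\Pi$ and $\Delta$ send each of $\ket v,\ket w$ back into $S$, so $S$ is invariant under both; moreover $\Delta|_S=\ketbra v v$ and $\Pi|_S=\ketbra w w$ are rank one, since rank two would give $\ket w\in\Range\Delta$ and hence $\ket w\parallel\ket v$. Setting $\theta=2\arccos\sqrt\lambda\in(0,\pi)$ and letting $\ket{v^\perp}$ be the unit vector of $S$ orthogonal to $\ket v$ pointing along $\ket w-\sqrt\lambda\,\ket v$, we obtain $\ket w=\cos\tfrac\theta2\ket v+\sin\tfrac\theta2\ket{v^\perp}$, and a one-line computation gives $(\identity-\Delta)\Pi\ket v=\cos\tfrac\theta2\sin\tfrac\theta2\,\ket{v^\perp}$, so this $\ket{v^\perp}$ is the one named in the statement.

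Next, work in the orthonormal basis $\{\ket v,\ket{v^\perp}\}$ of $S$. There $2\Delta-\identity$ is the reflection of the plane fixing $\ket v$ and negating $\ket{v^\perp}$, and $2\Pi-\identity$ is the reflection about the line through the origin making angle $\tfrac\theta2$ with $\ket v$; the composition of two planar reflections whose axes meet at angle $\tfrac\theta2$ is the rotation by $\theta$, so $(2\Pi-\identity)(2\Delta-\identity)$ acts on $S$ as rotation by $\theta$, whose eigenvectors are $\tfrac1{\sqrt2}(\ket v\mp i\ket{v^\perp})$ with eigenvalues $e^{\pm i\theta}$, as claimed. On any one-dimensional invariant subspace $\Span\{\ket u\}$ each of $\Pi,\Delta$ fixes or annihilates $\ket u$, so $(2\Pi-\identity)(2\Delta-\identity)$ is $\pm1$ there; thus that clause follows automatically once the decomposition is in hand.

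It remains to build a decomposition of the whole space in which the only two-dimensional pieces are such $S$'s. First split off the four mutually orthogonal, jointly invariant subspaces $H_{11}=\Range\Pi\cap\Range\Delta$, $H_{10}=\Range\Pi\cap\ker\Delta$, $H_{01}=\ker\Pi\cap\Range\Delta$, $H_{00}=\ker\Pi\cap\ker\Delta$, on which $(2\Pi-\identity)(2\Delta-\identity)$ equals $+1$ (on $H_{00}$ and $H_{11}$) or $-1$ (on $H_{01}$ and $H_{10}$); any orthonormal bases of them furnish the one-dimensional pieces. Let $H_{\mathrm{gen}}$ be the orthogonal complement, again jointly invariant. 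On $\Range\Delta$ the self-adjoint operator $\Delta\Pi\Delta$ has $0$-eigenspace $\ker\Pi\cap\Range\Delta$ and $1$-eigenspace $\Range\Pi\cap\Range\Delta$ — an eigenvalue $0$ or $1$ forces, via the equality case of $\norm{\Pi\ket v}^2=\bra v\Pi\ket v\le\norm{\ket v}^2$, that $\Pi\ket v=0$ or $\Pi\ket v=\ket v$ — so its eigenspaces $V_\lambda$, $\lambda\in(0,1)$, together span $\Range\Delta\cap H_{\mathrm{gen}}$. Choosing an orthonormal eigenbasis $\{\ket v\}$ of $\bigoplus_{\lambda\in(0,1)}V_\lambda$, the planes $S_{\ket v}$ lie in $H_{\mathrm{gen}}$ and are pairwise orthogonal: for two such eigenvectors one has $\braket v{v'}=0$ and $\bra v\Pi\ket{v'}=\bra v\Delta\Pi\Delta\ket{v'}=\lambda'\braket v{v'}=0$ (as $\ket v,\ket{v'}\in\Range\Delta$, and likewise with the roles swapped), and since $\Pi^2=\Pi$ this makes every inner product between a vector of $S_{\ket v}$ and one of $S_{\ket{v'}}$ vanish. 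Finally, $\ket v\mapsto\Pi\ket v/\sqrt\lambda$ is an isometry of $V_\lambda$ onto the $\lambda$-eigenspace of $\Pi\Delta\Pi$ on $\Range\Pi$, with inverse $\ket w\mapsto\Delta\ket w/\sqrt\lambda$; hence $\bigoplus_{\ket v}S_{\ket v}$ contains both $\Range\Delta\cap H_{\mathrm{gen}}$ and $\Range\Pi\cap H_{\mathrm{gen}}$, so any vector of $H_{\mathrm{gen}}$ orthogonal to every $S_{\ket v}$ is annihilated by $\Pi$ and by $\Delta$, hence lies in $H_{00}\subseteq H_{\mathrm{gen}}^\perp$ and is therefore zero. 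This exhibits the asserted decomposition.

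The two-dimensional matrix computation and the Cauchy–Schwarz bookkeeping are routine; the step I expect to require the most care is the last one — showing that the $S_{\ket v}$ exhaust $H_{\mathrm{gen}}$, which rests on the isometry between the $\lambda$-eigenspaces of $\Delta\Pi\Delta$ and of $\Pi\Delta\Pi$ together with the orthogonal-complement argument. Throughout I take $H$ finite-dimensional (all that is needed here); for general $H$ one replaces these eigenspace decompositions by the corresponding spectral integrals.
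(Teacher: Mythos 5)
The paper never proves this lemma: it is quoted as a known fact, attributed to Jordan and to its routine use in quantum computation, so there is no internal proof to measure you against. Your blind write-up is a correct, self-contained derivation and is essentially the canonical argument: you split off the four jointly invariant intersections $\Range\Pi\cap\Range\Delta$, $\Range\Pi\cap\Kernel\Delta$, $\Kernel\Pi\cap\Range\Delta$, $\Kernel\Pi\cap\Kernel\Delta$ (which supply the one-dimensional $\pm1$ pieces and absorb the eigenvalues $0$ and $1$ of $\Delta\Pi\Delta$), diagonalize $\Delta\Pi\Delta$ on what remains of $\Range\Delta$, and show that each eigenvector $\ket v$ with eigenvalue $\lambda\in(0,1)$ generates an invariant plane on which the two reflections compose to a rotation by $\theta=2\arccos\sqrt\lambda$. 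The two points that genuinely need care are both handled correctly: pairwise orthogonality of the planes (via $\bra v\Pi\ket{v'}=\bra v\Delta\Pi\Delta\ket{v'}=0$ together with $\Pi^2=\Pi$) and exhaustiveness (via the isometry $\ket v\mapsto\Pi\ket v/\sqrt\lambda$ between the $(0,1)$-eigenspaces of $\Delta\Pi\Delta$ and $\Pi\Delta\Pi$, so any leftover vector orthogonal to all planes is annihilated by both projections and was already split off). What your route buys over the paper's citation is that it makes the statement self-contained and, importantly for the later analysis (where $\ket{-\beta}=(2\Delta-\identity)\ket\beta$ and the sign of $\theta(\beta)$ matter), it fixes the exact pairing convention; the one place you gloss is precisely there: ``the composition of two reflections is a rotation by $\theta$'' does not by itself determine the direction of rotation, which depends on the order of the factors. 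It is worth recording the one-line check: in the ordered basis $(\ket v,\ket{v^\perp})$ the restriction of $(2\Pi-\identity)(2\Delta-\identity)$ is $\left(\begin{smallmatrix}\cos\theta&-\sin\theta\\ \sin\theta&\cos\theta\end{smallmatrix}\right)$, whose eigenvector $(\ket v-i\ket{v^\perp})/\sqrt 2$ indeed has eigenvalue $e^{+i\theta}$, matching the stated signs. Restricting to finite dimensions is fine for the paper's application, where the space is $\C^{F_0\cup\{0\}\cup I}$.
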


\begin{proof}[Proof of \lemref{t:completenesssoundness}]
Notice from Eqs.~\eqnref{e:Gadjacencymatrix} and~\eqnref{e:Gxadjacencymatrix} that $G$ is naturally a subgraph of $G(x)$.  Since~$A_G \Delta = 0$ by definition of $\Delta$, $A_{G(x)} \Delta = T (\identity - \Pi_x)$, where $T$ is a permutation matrix.  

First consider the case $f(x) = 1$.  Let $\ket \varphi$ be the restriction of $\ket \psi$ from Eq.~\eqnref{e:eigenvaluezero} to the vertices of~$G$.  Since $\ket \psi$ has no support on the extra vertices of $G(x)$, $\norm{\ket \varphi} = \norm{\ket \psi}$ and $\ket \varphi$ is an eigenvalue-zero eigenvector of $A_G$; $\Delta \ket \varphi  = \ket \varphi$.  Also $\Pi_x \ket \varphi = \ket \varphi$, so indeed $U_x \ket \varphi = \ket \varphi$.  

\def\tz{\zeta}
\def\nz{\hat{\tz}}

Next consider the case $f(x) = 0$.  Let 
\begin{equation}
\ket{\tz} = \sum_{\beta : \abs{\theta(\beta)} \leq \Theta} \ket \beta \braket \beta 0
\end{equation}
be the projection of $\ket 0$ onto the space of eigenvectors with phase at most $\Theta$ in magnitude.  Our aim is to upper bound $\norm{\ket \tz}{}^2 = \braket 0 \tz = \abs{\braket 0 \nz}{}^2$, where $\ket \nz = \ket \tz/\norm{\ket \tz}$.  Notice that $\ket \nz$ is supported only on eigenvectors $\ket \beta$ with $\theta(\beta) \neq 0$, i.e., on the two-dimensional invariant subspaces of $\Pi_x$ and~$\Delta$.  Indeed, if $U_x \ket \beta = \ket \beta$, then either $\ket \beta = \Pi_x \ket \beta = \Delta \ket \beta$ or $\ket \beta = (\identity - \Pi_x) \ket \beta = (\identity - \Delta) \ket \beta$.  The first possibility implies $A_{G(x)} \ket \beta = 0$, so by Eq.~\eqnref{e:effectivespectralgap} with $c = 0$, $\braket 0 \beta = 0$.  In the second possibility, also $\braket 0 \beta = \bra 0 \Pi_x \ket \beta = 0$ since $\ket 0 = \Pi_x \ket 0$.  

We can split $\braket 0 \nz$ as  
\begin{equation}\begin{aligned}
\braket 0 \nz 
&= \bra 0 \Delta \ket \nz + \bra 0 \Pi_x (\identity-\Delta) \ket \nz \\
&\leq \abs{\bra 0 \Delta \ket \nz} + \abs{\bra 0 \Pi_x (\identity-\Delta) \ket \nz} \\
&\leq \abs{\bra 0 \Delta \ket \nz} + \norm{\Pi_x (\identity-\Delta) \ket \nz}
 \enspace .
\end{aligned}\end{equation}

Start by bounding the second term, $\norm{\Pi_x (\identity-\Delta) \ket \nz}$.  Intuitively, this term is small because $\ket \nz$ is supported only on two-dimensional invariant subspaces where $\Delta$ and $\Pi_x$ are close.  Indeed, 
let $\ket{{-\beta}} = (2\Delta-\identity) \ket \beta$, an eigenvector of $A_G$ with phase $\theta(-\beta) = -\theta(\beta)$.  
Expanding $\ket \nz = \sum_\beta c_\beta \ket \beta$, 
\begin{align}
\norm{\Pi_x (\identity-\Delta) \ket \nz}{}^2 \nonumber
&= \norm{ \sum_\beta  \Pi_x (\identity - \Delta) c_\beta \ket \beta }{}^2 \\
&= \sum_{\beta : \theta(\beta) > 0} \norm{ \Pi_x (\identity - \Delta) (c_\beta \ket \beta + c_{-\beta} \ket{{-\beta}}) }{}^2 \nonumber \\
&= \sum_{\beta : \theta(\beta) > 0} \sin^2 \frac{\theta(\beta)}{2} \, \norm{(\identity - \Delta)(c_\beta \ket \beta + c_{-\beta} \ket{{-\beta}})}^2 \nonumber \\
&\leq \Big(\frac \Theta 2\Big)^2 \norm{(\identity - \Delta)\ket \nz}{}^2
 \enspace .
\end{align}

It remains to bound $\abs{\bra 0 \Delta \ket \nz} = \abs{\braket 0 w} \norm{\Delta \ket \nz}$, where $\ket w = \Delta \ket \nz / \norm{\Delta \ket \nz}$ is an eigenvalue-zero eigenvector of $A_G$.  Intuitively, if $\abs{\braket 0 w} = \abs{\bra 0 \Pi_x \ket w}$ is large, then since $A_G$ and $A_{G(x)}$ are the same on $\Pi_x$, $\norm{A_{G(x)} \ket w} = \norm{T (\identity - \Pi_x) \ket w}$ will be small.  This in turn will imply that $\ket w$ has large support on the small-eigenvalue subspace of $A_{G(x)}$, contradicting Eq.~\eqnref{e:effectivespectralgap}.  

Beginning the formal argument, we have 
\begin{equation}\begin{split}
\norm{A_{G(x)} \Delta \ket \nz}{}^2
&= \norm{(\identity - \Pi_x) \Delta \ket \nz}{}^2 \\
&= \sum_{\beta : \theta(\beta) > 0} \norm{(\identity-\Pi_x) \Delta (c_\beta \ket \beta + c_{-\beta} \ket{{-\beta}})}^2 \\
&= \sum_{\beta : \theta(\beta) > 0} \sin^2 \frac{\theta(\beta)}{2} \, \norm{\Delta (c_\beta \ket \beta + c_{-\beta} \ket{{-\beta}})}^2 \\
&\leq \Big(\frac \Theta 2\Big)^2 \norm{\Delta \ket \nz}{}^2
 \enspace .
\end{split}\end{equation}
Hence $\norm{A_{G(x)} \ket w} \leq \Theta / 2$.  

\def\vgood{w_\text{small}}
\def\vbad{w_\text{big}}
Now split $\ket w = \ket \vgood + \ket \vbad$, where for a certain $d > 0$ to be determined, 
\begin{equation}\begin{split}
\ket \vgood &= \sum_{\alpha : \abs{\rho(\alpha)} \leq d \Theta/2} \ket \alpha \braket \alpha w \\
\ket \vbad &= \sum_{\alpha : \abs{\rho(\alpha)} > d \Theta/2} \ket \alpha \braket \alpha w
 \enspace .
\end{split}\end{equation}
Then  
\begin{equation}
\abs{\bra 0 \Delta \ket \nz}
\leq \abs{\braket 0 w} 
\leq \abs{\braket 0 \vgood} + \abs{\braket 0 \vbad} 
 \enspace .
\end{equation}

From Eq.~\eqnref{e:effectivespectralgap} with $c = d \Theta W/2$, $\abs{\braket 0 \vgood} \leq \sqrt{72 (1+1/W)} c \norm{\ket \vgood} \leq 6 d \Theta W$.  

Since $A_{G(x)} \ket w = \sum_\alpha \rho(\alpha) \ket \alpha \braket \alpha w$, we have 
\begin{equation}\begin{split}
\Big(\frac \Theta 2\Big)^2
&\geq \norm{A_{G(x)} \ket w}^2 \\
&= \norm{A_{G(x)} \ket \vgood}^2 + \norm{A_{G(x)} \ket \vbad}^2 \\
&\geq d^2 \Big(\frac \Theta 2\Big)^2 \norm{\ket \vbad}^2
 \enspace .
\end{split}\end{equation}
Hence $\norm{\ket \vbad} \leq 1/d$.  

Combining our calculations gives 
\begin{equation}\begin{split}
\sqrt{\sum_{\beta : \abs{\theta(\beta)} \leq \Theta} \abs{\braket \beta 0}^2}
= \braket 0 \nz 
\leq \abs{\bra 0 \Delta \ket \nz} + \norm{\Pi_x (\identity - \Delta) \ket \nz} 
\leq 6 d \Theta W + \frac 1 d + \frac \Theta 2
 \enspace .
\end{split}\end{equation}
The right-hand side is $2 \sqrt{6 \Theta W} + \Theta/2$, as claimed, for $d = 1/\sqrt{6 \Theta W}$.  
\end{proof}

Having proved \lemref{t:completenesssoundness}, we return to the correctness proof for the third algorithm.  

\begin{proposition}
If $f(x) = 1$, then the third algorithm outputs $1$ with probability at least $64\%$.  If~$f(x) = 0$, then the third algorithm outputs $1$ with probability at most $61\%$.  
\end{proposition}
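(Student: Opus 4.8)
Write $\tau = \lceil 10^5 W \rceil$, so the third algorithm outputs $1$ with probability $p(x) = \Ex_{T \in_R [\tau]}\big[\,\abs{\bra 0 U_x^T \ket 0}^2\,\big]$; the plan is to show $p(x) \ge 64\%$ when $f(x)=1$ and $p(x)\le 61\%$ when $f(x)=0$. Completeness is immediate from \lemref{t:completenesssoundness}. Let $P_1$ be the projection onto the eigenvalue-one eigenspace of $U_x$; the eigenvector $\ket\varphi$ gives $\norm{P_1\ket 0}^2\ge 9/10$, hence $\norm{(\identity-P_1)\ket 0}^2\le 1/10$. Since $U_x^T$ commutes with $P_1$ and fixes $P_1\ket 0$, while $U_x^T(\identity-P_1)\ket 0\perp P_1\ket 0$,
\begin{equation*}
\mathrm{Re}\,\bra 0 U_x^T\ket 0 = \norm{P_1\ket 0}^2 + \mathrm{Re}\,\bra 0(\identity-P_1)U_x^T(\identity-P_1)\ket 0 \ge \tfrac9{10}-\tfrac1{10}=\tfrac45
\end{equation*}
for every $T$, so $\abs{\bra 0 U_x^T\ket 0}^2\ge (4/5)^2 = 64\%$ and $p(x)\ge 64\%$.

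Now assume $f(x)=0$. Apply the Jordan lemma to $\Pi_x$ and $\Delta$. Because $\Pi_x\ket 0=\ket 0$, the vector $\ket 0$ has no component on the one-dimensional invariant subspaces inside $\Kernel\Pi_x$, and on each two-dimensional block its component is a multiple of the $\Pi_x$-fixed vector $\cos\tfrac{\theta_b}2\ket{v_b}+\sin\tfrac{\theta_b}2\ket{v_b^\perp}$. Evaluating $\bra 0 U_x^T\ket 0$ block by block with the Jordan eigenvectors then gives, for every integer $T$,
\begin{equation*}
\bra 0 U_x^T\ket 0 = w_1 + w_{-1}(-1)^T + \sum_b q_b\cos(\theta_b T) \enspace,
\end{equation*}
where $q_b\ge 0$ is the squared norm of the component of $\ket 0$ on block $b$ and $w_{\pm1}$ is the weight of $\ket 0$ on the $(\pm1)$-eigenspace of $U_x$ (necessarily inside $\Range\Pi_x$). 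In particular $\bra 0 U_x^T\ket 0$ is \emph{real} — this is precisely why the third algorithm can dispense with phase estimation. Moreover $w_1=0$: a $(+1)$-eigenvector of $U_x$ in $\Range\Pi_x$ is an eigenvalue-zero eigenvector of $A_{G(x)}$ supported on the column vertices, so it has zero overlap with $\ket 0$ by \lemref{t:completenesssoundness} (its proof, taking $c=0$). Thus $p(x)=\Ex_T\big[(w_{-1}(-1)^T+\sum_b q_b\cos\theta_b T)^2\big]$ and $\sum_b q_b = 1-w_{-1}$.

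The crucial new input is that $w_{-1}$, and more generally the weight of $\ket 0$ on eigenphases near $\pm\pi$, is small. For any $y\in F_1$ set $\ket{\psi_y}=-3\sqrt W\ket 0+\sum_j\ket{j,y_j}\otimes\ket{v_{yj}}$; the SDP constraint $\sum_{j:x'_j\ne y_j}\braket{v_{x'j}}{v_{yj}}=1$ for each $x'\in F_0$ says exactly that $A_G\ket{\psi_y}=0$, and $\abs{\braket 0{\psi_y}}^2/\norm{\ket{\psi_y}}^2 = 9W/(9W+\sum_j\norm{v_{yj}}^2)\ge 9/10$ since $\sum_j\norm{v_{yj}}^2\le W$. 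Hence the projection of $\ket 0$ onto $\Kernel A_G = \Range\Delta$ satisfies $\norm{\Delta\ket 0}^2\ge 9/10$, i.e. $\norm{(\identity-\Delta)\ket 0}^2\le 1/10$. Since $(-1)$-eigenvectors of $U_x$ in $\Range\Pi_x$ lie in $\Kernel\Delta$, and the component of $\ket 0$ on block $b$ has $\sin^2(\theta_b/2)$ of its weight in $\Kernel\Delta$, we get $w_{-1}+\sum_b\sin^2(\theta_b/2)\,q_b = \norm{(\identity-\Delta)\ket 0}^2\le 1/10$; so $w_{-1}\le 1/10$ and $\sum_{b:\theta_b>\pi-\eta}q_b\le (1/10)/\cos^2(\eta/2)$ for any $\eta\in(0,\pi)$.

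To finish, write $M_T=\sum_b q_b e^{i\theta_b T}$, so $\sum_b q_b\cos\theta_b T=\mathrm{Re}\,M_T$ with $\abs{M_T}\le 1-w_{-1}$, and use $(\mathrm{Re}\,z)^2=\tfrac12\abs z^2+\tfrac12\mathrm{Re}(z^2)$:
\begin{equation*}
p(x)=\Ex_T\big[\,w_{-1}^2 + 2w_{-1}(-1)^T\mathrm{Re}\,M_T + \tfrac12\abs{M_T}^2 + \tfrac12\mathrm{Re}(M_T^2)\,\big] \enspace.
\end{equation*}
The first and third terms are bounded pointwise by $w_{-1}^2+\tfrac12(1-w_{-1})^2\le \tfrac12$ (maximized at $w_{-1}=0$ over $w_{-1}\in[0,\tfrac1{10}]$). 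In the remaining two terms the phases $\pi+\theta_b$ and $\theta_b+\theta_{b'}$ are bounded away from $0$ modulo $2\pi$ \emph{except} when the relevant $\theta_b$ lie near $0$ or near $\pi$; split each sum accordingly and apply the Dirichlet bound $\abs{\Ex_{T\in[\tau]}[e^{i\phi T}]}\le\min\{1,(\tau\abs{\sin(\phi/2)})^{-1}\}$. The ``near $\pi$'' parts carry total $q$-weight $\le 1/10$ (previous paragraph) and are bounded by their weight; the ``near $0$'' parts carry weight $\le(2\sqrt{6\eta_0 W}+\eta_0/2)^2$ by \lemref{t:completenesssoundness}, a small constant for $\eta_0$ a small constant multiple of $1/W$; and for all other phases $\abs{\sin(\phi/2)}$ is bounded below by a constant multiple of $1/W$, so the Dirichlet factor is $O(W/\tau)=O(10^{-5})$. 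These contributions are each a small absolute constant for $W\ge 1$ and sum to less than $11\%$, giving $p(x)<61\%$.

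The hard part is the third paragraph: without the bound $w_{-1}\le 1/10$ the state $\ket 0$ could be close to a $(-1)$-eigenvector of $U_x$, in which case $\abs{\bra 0 U_x^T\ket 0}^2$ is close to $1$ for every $T$ and the algorithm fails on a $0$-instance — yet the effective spectral gap of \lemref{t:completenesssoundness} by itself controls only the weight of $\ket 0$ near eigenphase $0$. Everything after that is routine trigonometry, organised around the query budget $\tau=\Theta(W)$, which is exactly what dephases any concentration of $\ket 0$'s weight at an interior eigenphase.
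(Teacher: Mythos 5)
Your proposal is correct and follows essentially the same route as the paper's proof: the crude $(9/10-1/10)^2$ bound for completeness, and for soundness the same three ingredients---the equal splitting of $\ket 0$'s weight over the $\pm\theta$ Jordan pairs (your block-wise $\cos(\theta_b T)$ form is the paper's observation $\braket 0 \beta = e^{-i\theta(\beta)}\braket 0 {-\beta}$), the effective spectral gap near phase $0$ from \lemref{t:completenesssoundness}, the bound on weight near phase $\pi$ via $\norm{\Delta\ket 0}^2 \geq 9/10$ (your explicit kernel vector $\ket{\psi_y}$ is the paper's \lemref{t:eigenvaluezeroGx} construction), and Dirichlet-kernel averaging over the random $T\in[\tau]$ for the intermediate phases. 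The only differences are bookkeeping (your $(\operatorname{Re}z)^2=\tfrac12\abs z^2+\tfrac12\operatorname{Re}(z^2)$ expansion versus the paper's grouping of the sum- and difference-phase terms, and handling the $\pm1$ weights separately), and your final constant-chasing is left as a sketch with unspecified cutoffs $\eta_0,\eta$, but the stated estimates do close with room to spare, much as the paper's own ``a calculation yields $p_1\leq 61\%$.''
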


\begin{proof}
Letting $\tau = \lceil 10^5 W \rceil$, the third algorithm outputs $1$ with probability 
\begin{equation} \label{e:thirdalgorithmoutputsone}
p_1 := \Ex_{T \in_R [\tau]}\!\big[\abs{\bra 0 U_x^T \ket 0}{}^2\big] = \Ex_{T \in_R [\tau]} \Big\vert \sum_\beta e^{i \theta(\beta) T} \abs{\braket \beta 0}^2 \Big\vert^2
 \enspace .
\end{equation}
If $f(x) = 1$, then a crude bound puts $p_1$ at least $(9/10 - 1/10)^2 = 64\%$.  

Assume $f(x) = 0$.  Recall the notation that for an eigenvector $\ket \beta$ with $\abs{\theta(\beta)} \in (0, \pi)$, $\ket{{-\beta}} = (2 \Delta - \identity) \ket \beta$ denotes the corresponding eigenvector with eigenvalue phase $\theta(-\beta) = - \theta(\beta)$.  The key observation for this proof is that 
\begin{equation}
\braket 0 \beta = e^{-i \theta(\beta)} \braket 0 {{-\beta}}
 \enspace .
\end{equation}
This equal splitting of $\abs{\braket 0 \beta}$ and $\abs{\braket 0 {{-\beta}}}$ will allow us to bound $p_1$ close to $1/2$ instead of the trivial bound $p_1 \leq 1$.  The intuition is that after applying $U_x$ a suitable number of times, eigenvectors~$\ket \beta$ and $\ket {{-\beta}}$ will accumulate roughly opposite phases, so their overlaps with $\ket 0$ will roughly cancel out.  For this argument to work, though, the eigenvalue phase $\theta(\beta)$ should be bounded away from zero and from $\pm \pi$.  Therefore define the projections 
\begin{equation}\begin{split}
\Delta_\Theta &= \sum_{\beta : \abs{\theta(\beta)} \leq \Theta} \ketbra \beta \beta \\
\overline{\Delta}_\Lambda &= \sum_{\beta : \abs{\theta(\beta)} > \Lambda} \ketbra \beta \beta \\
\Sigma &= \identity - \Delta_\Theta - \overline{\Delta}_\Lambda
 \enspace ,
\end{split}\end{equation}
where $\Theta$ and $\Lambda$, $0 < \Theta < \Lambda < \pi$, will be determined below.  \lemref{t:completenesssoundness} immediately gives the bound~$\norm{\Delta_\Theta \ket 0} \leq 2 \sqrt{6 \Theta W} + \frac \Theta 2$.  We can also place a bound on $\norm{\overline{\Delta}_\Lambda \ket 0}$, using 
\begin{equation}
2(\Delta - \identity) \ket 0 = (U_x^\dagger - \identity) \ket 0 = \sum_\beta (e^{-i \theta(\beta)} - 1) \ket \beta \braket \beta 0
 \enspace .
\end{equation}
Expanding the squared norm of both sides gives  
\begin{align}
\norm{(U_x^\dagger - \identity) \ket 0}{}^2
&= 
4 \sum_\beta \sin^2 \frac{\theta(\beta)}{2} \abs{\braket \beta 0}^2
\geq 
\norm{\overline{\Delta}_\Lambda \ket 0}{}^2 \cdot 4 \sin^2 \frac \Lambda 2 \label{e:minusonegap1}
\intertext{and}
\norm{(U_x^\dagger - \identity) \ket 0}{}^2
&=
4 (1 - \norm{\Delta \ket 0}^2) \leq 2/5 \label{e:minusonegap2}
 \enspace .
\end{align}
In the second step we have used that $\norm{\Delta \ket 0}^2 \geq 9/10$; provided that $f$ is not the constant zero function, $A_G$ must have an eigenvalue-zero eigenvector with large overlap on $\ket 0$.  Combining Eqs.~\eqnref{e:minusonegap1} and~\eqnref{e:minusonegap2} gives 
\begin{equation}
\norm{\overline{\Delta}_\Lambda \ket 0}{}^2 \leq \frac{1}{10 \sin^2 \frac \Lambda 2}
 \enspace .
\end{equation}

Returning to Eq.~\eqnref{e:thirdalgorithmoutputsone}, we have 
\begin{equation}\begin{split} \label{e:thirdalgorithmoutputsone2}
p_1
&\leq
\Ex_{T \in_R [\tau]} \Big( \norm{\Delta_\Theta \ket 0}^2 + \norm{\overline{\Delta}_\Lambda \ket 0}{}^2 + \Big\vert \sum_{\beta : \theta(\beta) \in (\Theta, \Lambda]} \abs{\braket \beta 0}^2 \big(e^{i \theta(\beta) T} + e^{- i \theta(\beta)T}\big) \Big\vert \Big)^2 \\
&\leq 
(\norm{\Delta_\Theta \ket 0}^2 + \norm{\overline{\Delta}_\Lambda \ket 0}{}^2)(2 - \norm{\Delta_\Theta \ket 0}^2 - \norm{\overline{\Delta}_\Lambda \ket 0}{}^2) \\
&\quad + \Ex_{T \in_R [\tau]} \Big( \sum_{\beta : \theta(\beta) \in (\Theta, \Lambda]} \abs{\braket \beta 0}^2 \big(e^{i \theta(\beta) T} + e^{- i \theta(\beta)T}\big) \Big)^2
 \enspace .
\end{split}\end{equation}
The algorithm chooses $T$ at random to allow bounding the last term.  Expanding this term gives 
\begin{equation}\begin{split}
\Ex&_{T \in_R [\tau]} \sum_{\beta, \beta' : \theta(\beta), \theta(\beta') \in (\Theta, \Lambda]} \abs{\braket \beta 0}^2 \abs{\braket{\beta'} 0}^2 \big(e^{i \theta(\beta) T} + e^{- i \theta(\beta)T}\big) \big(e^{i \theta(\beta') T} + e^{- i \theta(\beta')T}\big) \\
&= 
\Ex \sum_{\theta, \theta' \in (\Theta, \Lambda]} \abs{\braket \beta 0}^2 \abs{\braket{\beta'} 0}^2 \big( (e^{i (\theta+\theta') T} + e^{- i (\theta + \theta') T}) + (e^{i (\theta - \theta') T} + e^{- i (\theta - \theta')T})\big) \\
&\leq \frac12 \norm{\Sigma \ket 0}^4 + \Ex \sum_{\theta, \theta' \in (\Theta, \Lambda]} \abs{\braket \beta 0}^2 \abs{\braket{\beta'} 0}^2 \big(e^{i (\theta+\theta') T} + e^{- i (\theta + \theta') T}\big) \\
&= \frac12 \norm{\Sigma \ket 0}^4 + \frac 1 \tau \sum_{\theta, \theta' \in (\Theta, \Lambda]} \abs{\braket \beta 0}^2 \abs{\braket{\beta'} 0}^2 \Big( \frac{e^{i (\theta + \theta') (\tau+1)} - e^{-i (\theta + \theta')\tau}}{e^{i (\theta + \theta')} - 1} - 1\Big) \\
&\leq 
\frac12 \Big(1 + \frac{1/\tau}{\min_{\theta, \theta' \in (\Theta, \Lambda]} \abs{e^{i(\theta+\theta')}-1}}\Big) \norm{\Sigma \ket 0}^4 \\
&\leq 
\frac12 \Big(1 + \frac{1}{2 \tau \min \{\sin \Theta, \sin \Lambda\}}\Big) \norm{\Sigma \ket 0}^4
 \enspace .
\end{split}\end{equation}
Here for brevity we have written $\theta$ and $\theta'$ for $\theta(\beta)$ and $\theta(\beta')$, respectively.  In the second and fourth steps, we have used $\sum_{\theta \in (\Theta, \Lambda]} \abs{\braket \beta 0}^2 = \frac12 \norm{\Sigma \ket 0}^2$.  In the last step, we have used $\abs{e^{i (\theta + \theta')} - 1} = 2 \sin \frac{\theta+\theta'}{2} \geq 2 \min \{\sin \Theta, \sin \Lambda\}$.  
Substituting the result back into Eq.~\eqnref{e:thirdalgorithmoutputsone2} gives
\begin{equation}\begin{split}
p_1 
&\leq
1 - \frac12 \Big(1 - \frac{1}{2 \tau \min \{\sin \Theta, \sin \Lambda\}}\Big) \norm{\Sigma \ket 0}^4 \\
&\leq 
1 - \frac12 \Big(1 - \frac{1}{2 \tau \min \{\sin \Theta, \sin \Lambda\}}\Big) \max \Big[1 - \frac{1}{10 \sin^2 \frac \Lambda 2} - \Big(2 \sqrt{6 \Theta W} + \frac \Theta 2\Big)^2, 0 \Big]^2
 \enspace .
\end{split}\end{equation}
Setting $\Lambda = \pi - \Theta$ and $\Theta = 1/ (2000W)$, for $W \geq 1$ a calculation yields $p_1 \leq 61\%$.  
\end{proof}

\subsection*{Acknowledgements}
I would like to thank Troy Lee for his help in formulating \thmref{t:querycomplexitytightnonbinary}.  I also thank Sergio Boixo, Stephen Jordan, Julia Kempe and Rajat Mittal for helpful comments, and the Institute for Quantum Information for hospitality.  Research supported by NSERC, ARO and MITACS.

\bibliographystyle{alpha-eprint}
\bibliography{andor}

\bigskip\bigskip
\indent Institute for Quantum Computing, University of Waterloo\\
\indent E-mail address: {\tt breic@iqc.ca}

\end{document}